\newcommand{\norm}[1]{\left \lVert #1 \right \rVert}
\title{Assessing Quantum Advantage for Gaussian Process Regression}
\author{
    Dominic Lowe$^{1}$, 
    M.S. Kim$^{1}$, 
    Roberto Bondesan$^{2}$ 
}
\date{}
\newtheorem{lemma}{Lemma}
\newtheorem{theorem}{Theorem}
\newtheorem{corollary}{Corollary}[theorem]
\newtheorem*{namedtheorem}{Theorem}
\newtheorem*{dataassumption}{Assumption}
\begin{document}

\maketitle
\begin{center}
$^{1}$ Blackett Laboratory, Imperial College London, SW7 2AZ, United Kingdom\\
$^{2}$ Department of Computing, Imperial College London, SW7 2AZ, United Kingdom
\end{center}

\begin{abstract}
    Gaussian Process Regression is a well-known machine learning technique for which several quantum algorithms have been proposed. 
    We show here that in a wide range of scenarios these algorithms show no exponential speedup.
    We achieve this by rigorously proving that the condition number of a kernel matrix scales at least linearly with the matrix size under general assumptions on the data and kernel.
    We additionally prove that the sparsity and Frobenius norm of a kernel matrix scale linearly under similar assumptions. 
    The implications for the quantum algorithms runtime are independent of the complexity of loading classical data on a quantum computer and also apply to dequantised algorithms.
    We supplement our theoretical analysis with numerical verification for popular kernels in machine learning.
\end{abstract}

\section{Introduction}

Over the last decade, the question of whether quantum computers can speed up machine learning workloads has been under intense investigation \cite{Biamonte_2017,Cerezo_2022,Ciliberto_2018}.
With quantum computers excelling at performing matrix operations in high-dimensional spaces, a natural question is whether machine learning algorithms can be implemented using quantum linear algebra more efficiently
\cite{lloydQuantumAlgorithmsSupervised2013,lloydQuantumPrincipalComponent2014,kerenidisQuantumRecommendationSystems2017}.
The workhorse of these proposals is the HHL algorithm that approximates the solution of a linear system \cite{harrowQuantumAlgorithmSolving2009}.
An important caveat of these quantum machine learning algorithms is that for exponential speedup over the best classical methods, we need to load the classical data into a quantum superposition in logarithmic time, an assumption whose practicality has been heavily debated \cite{jaques2023qramsurveycritique}.
In fact, classical algorithms supplied with a similar input data structure exhibit only a polynomial slowdown compared to their quantum analogues \cite{tangQuantuminspiredClassicalAlgorithm2019,tangQuantumPrincipalComponent2021}. 

In this paper, we consider the problem of benchmarking quantum machine learning algorithms assuming the data has been loaded on a quantum computer. We focus primarily on the problem of Gaussian Process Regression (GPR).
GPR is a machine learning technique that is widely applied due to its ability to provide calibrated uncertainty over predictions \cite{rasmussenGaussianProcessesMachine2005}, but that suffers from a computational bottleneck due to the requirement of inverting a covariance matrix.
We show that for a broad class of kernels, the condition number of the covariance matrix -- defined as the ratio of its largest to smallest eigenvalue -- scales at least linearly with the number of data points. All currently proposed quantum algorithms for Gaussian Process Regression (GPR) have a time complexity that depends, either linearly or polynomially, on this condition number
\cite{zhaoQuantumassistedGaussianProcess2019,chenQuantumAlgorithmGaussian2022,farooqQuantumassistedHilbertspaceGaussian2024a}.

The significance of this result is that it allows a direct comparison of these quantum algorithms to their classical counterparts. From this it becomes clear that these algorithms (even before data-loading or error correction are taken into account) can only perform marginally better than a standard classical approach.  

Several alternative classical approaches to matrix inversion also have a time complexity that depends at least linearly on the condition number, and so are affected similarly. For example, conjugate gradient methods for sparse matrices \cite{ConjugateGradientDescent}
and quantum inspired approaches for matrix inversion \cite{shaoFasterQuantuminspiredAlgorithms2023}. The algorithms mentioned in \cite{shaoFasterQuantuminspiredAlgorithms2023} instead depend on the condition number with respect to the Frobenius norm, but this still scales at least linearly with the number of datapoints by corollary \ref{cor:2}.

Our main technical contribution is an asymptotic formula for the condition number in the regularised case, and a lower bound in the un-regularised case. We believe the former is the first result of its kind, in that it describes the condition number (for a wide variety of kernels) purely in terms of the regularisation parameter and the largest eigenvalue of the corresponding integral operator.
Our results show that, independently of the practicality of loading the classical data on a quantum computer in logarithmic time, achieving exponential quantum advantage for GPR requires novel approaches.

\subsection{Overview of Gaussian Process Regression}
GPR is a non-parametric machine learning technique with diverse applications \cite{rasmussenGaussianProcessesMachine2005}. Given training data in the form of input-output pairs, it models the underlying function as a Gaussian Process. The task is then to determine the predictive distribution at a test point given training data. From here samples of the function evaluated at the test points can be drawn, yielding predicted outputs. 

Suppose we have some training data $\{x_i, y_i\}_{i=1}^{m}$ with inputs $x_i \in \chi$, where $\chi$ is a measurable space and noisy outputs $y_i \in \mathbb{R}$. We aim to model a function $f(x)$ that satisfies:
$$y = f(x) + \epsilon$$
Here $\epsilon$ denotes the noise in the outputs and we assume that $\epsilon \sim N(0,\sigma_n^2)$.

A Gaussian process is a collection of random variables, any finite number of which have a joint Gaussian distribution. In the scenario described here we treat $f(x_i)$ as Gaussian random variables. The Gaussian process has a prior distribution specified by a mean function $m(x)$ and a covariance function $k(x, x')$ given by:
\begin{align}
    m(x) &= \mathbb{E}[f(x)] \\
    k(x,x') &= \mathbb{E}[((f(x)- m(x))(f(x') - m(x'))]
\end{align}
Both functions can be chosen by the user. Typically a choice of $m(x) \equiv 0$ is used, but if there is prior information suggesting otherwise then this can be altered. The covariance function (or kernel) specifies how closely related two function values should be at different distances. Generally, the covariance function determines how smooth and stationary the considered functions should be.  

For a single test point $x_*$, the distribution of $f(x_*)$ (denoted by $f_*$ for simplicity), conditional on the training data and $x_*$, is Gaussian. We denote the mean and variance of this distribution by $\overline{f_*}$ and $\mathbb{V}[f_*]$ respectively, such that:
$$ p(f_* | x_*, \{x_i, y_i\}) \sim N(\overline{f_*}, \mathbb{V}[f_*])$$
We let $K$ denote the $m\times m$ matrix $[k(x_i,x_j)]_{i,j}$ where $i,j$ range over all the training data. We also let $\mathbf{k_*}$ denote the vector $[k(x_i,x_*)]_{i=1}^m$. Then with the choice of mean function $m(x) = 0$ it can be shown that:
\begin{align}
    \overline{f}_* &= \mathbf{k}_*^T(K+\sigma_n^2I)^{-1}\bf{y} \\
    \mathbb{V}[f_*] &= k(x_*,x_*) - \mathbf{k}_*^T(K+\sigma_n^2I)^{-1}\mathbf{k}_*
\end{align}
where $\bf{y}$ is the vector of training outputs. A full derivation can be found in \cite{rasmussenGaussianProcessesMachine2005}. Calculating these is how inference is performed using GPR.

\subsection{Quantum Algorithms for GPR}
It is clear that the largest computational cost of performing GPR is the matrix inversion of $K +\sigma_n^2 I$. In a classical setting this is typically performed via a Cholesky decomposition, leading to a time complexity of $O(m^3)$. There are currently three different quantum algorithms proposed for the problem of GPR. 

Zhao et al. published \cite{zhaoQuantumassistedGaussianProcess2019} the original quantum GPR algorithm, employing HHL to perform the matrix inversion. The use of HHL also means that this algorithm relies on the assumption that the matrix $K$ is sparse in order to achieve an exponential speedup. 

Chen et al. \cite{chenQuantumAlgorithmGaussian2022} highlight some issues with Zhao's algorithm that they aim to rectify. They point out that Zhao's algorithm relies on the assumption that preparing $|k_*\rangle$ and simulating $K+\sigma_n^2$ can be done efficiently. Here $|k_*\rangle$ is a quantum state representing the classical vector $\mathbf{k}_*$. In fact, Chen's algorithm is able to remove the precomputation of $K$ completely, meaning that it takes only the training and test data as input rather than the covariance matrix. This is promising as it means that this algorithm makes very few assumptions about quantum state preparation. They achieve this by encoding the data as approximate coherent states. The advantage of this is that given two states encoding classical vectors $|x_i\rangle$ $|x_j\rangle$, we have $\langle x_i|x_j\rangle \approx k(x_i, x_j)$, where this approximation can be taken to arbitrary precision. However, one downside to this algorithm is that the kernel choice is limited to the RBF kernel and the number of qubits required seems to be high. Also, in the best case the runtime is $O(m\log(m))$, achieving only a small polynomial speedup.

The final algorithm by Farooq et al. \cite{farooqQuantumassistedHilbertspaceGaussian2024a} uses a Hilbert space basis function approximation to remove the dependence on sparsity that arises in Zhao's algorithm. This enables them to reduce the dependence on $m$ in the time complexity to logarithmic. We include the complexity of each algorithm in Table \ref{table1}.

\begin{table}[htbp]
\centering
\resizebox{0.5\textwidth}{!}{%
\renewcommand{\arraystretch}{2}
\begin{tabular}{|c|c|}
\hline
\textbf{Algorithm} & \textbf{Time Complexity} \\
\hline
Zhao et al.\cite{zhaoQuantumassistedGaussianProcess2019} & $O(\text{log}(m)\kappa^2s^2/\epsilon)$ \\
\hline
Chen et al.\cite{chenQuantumAlgorithmGaussian2022} & $O(\frac{1}{\sqrt{P_k}}d\text{log}(\frac{d}{\delta})m\text{log}(m)\epsilon^{-3}\kappa)$\\
\hline
Farooq et al.\cite{farooqQuantumassistedHilbertspaceGaussian2024a} & $O(\text{poly}(\text{log}(mM))\text{log}(M)\epsilon^{-3}\kappa^2)$ \\
\hline
\end{tabular}%
}
\caption{Comparison of complexities for quantum GPR algorithms. $\kappa$ and $s$ denote the condition number and sparsity of $K$ respectively. $M$ denotes the number of basis functions needed for the approximation in the algorithm of \cite{farooqQuantumassistedHilbertspaceGaussian2024a}. $P_k$ is the probability of correctly preparing the state $|k_*\rangle$ in the algorithm of \cite{chenQuantumAlgorithmGaussian2022}. Finally $\epsilon$ and $\delta$ both denote chosen precisions at various steps in each algorithm. }
\label{table1}
\end{table}
We note that all three of these algorithms have polynomial dependence on the condition number $\kappa$, and as such any claims about these algorithms providing a speedup are usually met with the assumption that the matrix is well conditioned, essentially meaning that the condition number scales as $O(\text{log}(m))$. In this paper we investigate how reasonable of an assumption this is and show that in a wide setting the condition number instead scales linearly with $m$.

We note that the dependency of quantum linear system solvers (QLSS) on the condition number has been improved after some of these algorithms for GPR have been published. The state-of-the-art is \cite{costa2021optimalscalingquantumlinear}, which has a runtime linear in $\kappa$, and a quantum algorithm for GPR that employs this paper's solver is similarly affected by our results.

The complexity given for Zhao et al. \cite{zhaoQuantumassistedGaussianProcess2019} in table \ref{table1} is for their original algorithm using HHL. The authors discuss how their algorithm can be generalised for use with QLSS other than HHL. We consider this in more detail in section \ref{discussion}.

\subsection{Previous Results on Condition Number Asymptotics}

Both the condition number of random matrices and the eigenvalues of kernel matrices have been studied somewhat extensively in the machine learning and pure mathematics literature. Edelman \cite{edelman_eigenvalues_1988} proves that the expected value of the log condition number of a matrix with normally distributed entries scales logarithmically with the matrix size. Tao and Vu \cite{tao_condition_2007} consider randomly perturbed matrices and are able to show that the condition number grows polynomially with the matrix size under some mild assumptions. These results are both indicative of an issue regarding the growth of the condition number, but do not consider kernel matrices specifically. Posa \cite{posa_conditioning_1989} performs a numerical analysis of the condition numbers of covariance matrices for several different kernels and Zimmerman \cite{zimmermann_condition_2015} investigates the effect of varying hyperparameters on the condition number, focusing on the RBF kernel and giving arguments as to why it often leads to ill-conditioned kernel matrices.

\section{Results}
\subsection{Main Theorems}
Let $(\chi,\mu)$ be a probability space and $k \in L_2(\chi^2,\mu^2)$ be a symmetric, measurable, kernel function. We define the corresponding integral operator $T_k : L_2(\chi,\mu) \to L_2(\chi, \mu)$ as follows:
$$T_k(f)(x') = \int_\chi k(x,x')f(x) \mathrm{d}\mu(x)$$

Then since $k$ is symmetric, $T_k$ is a self adjoint compact operator, so we can apply the following spectral decomposition. That is, the eigenvalues of $T_k$, denoted $\{ \lambda_i\}_{i=1}^\infty$, are real, and when written in order of descending absolute value, including multiplicities we have:
$$k(x,y) = \sum_{i=1}^\infty \lambda_i \phi_i(x)\phi_i(y)$$
where $\phi_i$ are the corresponding (normalised) eigenfunctions to each $\lambda_i$ and the above series converges in $L_2$ norm. Moreover we note that the eigenvalues $\{\lambda_i\}_{i=1}^\infty$ are square summable. These results are well known in spectral analysis, see for example \cite{chaitin1983spectral,dunford1988linear}.

Let $X_i:\Omega\to\chi$ be independent, $\mu$ distributed random variables. These are the inputs of the GPR algorithm. The matrix $K_m$ is given by $(K)_{i,j} = k(X_i,X_j)$ for $1 \leq i,j, \leq m$. Let $\tilde{\lambda}_{n,m}$ denote the $n$'th eigenvalue of the $m \times m$ matrix $K_m$, where the eigenvalues of $K_m$ are written in descending order. That is, for all $m\in\mathbb{N}$,
$$\tilde{\lambda}_{1,m}\geq\tilde{\lambda}_{2,m}\geq...\geq\tilde{\lambda}_{m,m}$$
Note that here the matrix $K_m$ is random and so the eigenvalues $\tilde{\lambda}_{n,m}$ are also technically random variables, although we suppress this notation for the most part and only discuss it where necessary. As above we denote the eigenvalues of the integral operator $T_k$ by $\lambda_n$ for $n \in \mathbb{N}$, where each eigenvalue occurs a number of times equal to its multiplicity and again they are taken to be in descending order. In our current notation, the condition number $\kappa_{m,\sigma}$ of the matrix $K_m + \sigma^2I$ is given by:
$$\kappa_{m,\sigma} = \frac{\tilde{\lambda}_{1,m} + \sigma^2}{\tilde{\lambda}_{m,m}+\sigma^2}$$
Recall that here $\sigma^2$ is often just a regularisation added to the diagonal in order to ensure that the kernel matrix is invertible, but that it also corresponds to the variance in the noise (as per the GPR description above).

We restrict ourselves to the scenario where as $m$ increases, the sampled data used to construct the matrix $K_m$ is kept. That is, instead of drawing new samples for the entire training data every time $m$ is increased, we keep the old data and instead sample one more data point to add. This assumption accurately reflects the way that the machine learning algorithm is implemented in practice \cite{rasmussenGaussianProcessesMachine2005}. We state this formally below and it will be assumed in the proof of theorems \ref{Thm:1}\&\ref{Thm:2}.
\begin{dataassumption}[Data Assumption]\label{ass:data}
    We assume the Gram matrices $K_m$ are constructed such that $\forall n<m$, $(K_n)_{i,j = 1}^n = k(x_i,x_j)$ and $(K_m)_{i,j = 1}^m = k(x_i,x_j)$ where $(x_i|i=1,...,n)$ contains the first $n$ elements of $(x_i|i=1,...,m)$ in the same order.
\end{dataassumption}
We present one theorem dealing with the $\sigma >0$ case and one for the $\sigma =0$ case.
\begin{theorem}\label{Thm:1}
    Let $k\in L_2(\chi^2,\mu^2)$ be a measurable, symmetric, positive semi-definite kernel function with bounded diagonal. Assume also that $\lambda_1\neq0$. Let $K_m$ be the corresponding Gram matrices constructed from $\mu$-distributed data according to the above data assumption. Finally, let $C$ denote the random variable obtained in the limit of $\tilde{\lambda}_{m,m}$ as $m\to\infty$. Then $\forall \sigma>0, \omega\in\Omega$,
    $$\kappa_{m,\sigma}(\omega) \sim \frac{\lambda_1}{C(\omega) + \sigma^2}m$$
    where the above holds with probability 1, in the sense that:
    $$\lim_{m\to\infty} \frac{C+\sigma^2}{\lambda_1m}\kappa_{m,\sigma} = 1 \hspace{1em}(a.s)$$
\end{theorem}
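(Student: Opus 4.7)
The plan is to reduce the statement to two almost-sure convergence results and combine them algebraically: (a) $\tilde{\lambda}_{1,m}/m \to \lambda_1$, and (b) $\tilde{\lambda}_{m,m}$ converges to some (possibly random) limit $C\geq 0$, which is then taken as the definition of $C$. Given (a) and (b), the conclusion is immediate by writing
\[
\frac{C+\sigma^2}{\lambda_1 m}\,\kappa_{m,\sigma} \;=\; \frac{C+\sigma^2}{\tilde{\lambda}_{m,m}+\sigma^2} \cdot \frac{\tilde{\lambda}_{1,m}/m \;+\; \sigma^2/m}{\lambda_1}.
\]
The first factor tends to $1$ by (b), using crucially that $\sigma^2>0$ ensures the denominator stays bounded away from $0$; the second tends to $1$ by (a), where $\lambda_1\neq 0$ is used to divide. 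Multiplying the limits gives the theorem's almost-sure statement.

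For (b), the key observation is that Data Assumption~\ref{ass:data} makes $K_{m-1}$ a leading principal submatrix of $K_m$. The Cauchy interlacing theorem then yields $\tilde{\lambda}_{m,m}(\omega) \leq \tilde{\lambda}_{m-1,m-1}(\omega)$ pointwise in $\omega$. Combined with $\tilde{\lambda}_{m,m}\geq 0$ (by positive semi-definiteness of $k$), we get a bounded, monotone non-increasing sequence on every sample path, which converges pointwise (and so almost surely) to a non-negative random variable $C$. This step requires no probabilistic input beyond the nesting assumption.

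The substantive work is in (a). My approach is to exploit the classical identification of the non-zero spectrum of $\tfrac{1}{m}K_m$ with that of the empirical integral operator $T_{k,m}$ acting on $L^2(\chi,\mu_m)$, where $\mu_m = \tfrac{1}{m}\sum_{i=1}^m \delta_{X_i}$. The bounded-diagonal hypothesis ensures $k\in L^\infty$ along the diagonal, hence $\mathbb{E}[k(X,X)^2]<\infty$, which in turn controls $\|T_{k,m}-T_k\|_{\mathrm{HS}}$ via a strong law of large numbers in Hilbert--Schmidt norm; applying a Weyl-type perturbation inequality to each eigenvalue yields $\tilde{\lambda}_{1,m}/m \to \lambda_1$ almost surely. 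This is essentially the Koltchinskii--Gin\'e spectral convergence result, and it is the main obstacle: care is needed because we cannot assume $k$ is $L^\infty$ on $\chi^2$, only $L^2$ with bounded diagonal, so the concentration argument must be run in the weaker norm, and the top eigenvalue must be extracted without using any gap assumption (since $\lambda_1$ may have multiplicity). Once (a) is established, the theorem follows by the algebraic identity above.
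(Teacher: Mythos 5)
Your proposal is correct and follows essentially the same route as the paper: part (b) is the paper's Lemma~\ref{lem:2} (Cauchy interlacing under the \nameref{ass:data} plus monotone convergence), part (a) is the paper's Lemma~\ref{lem:1} (the Koltchinskii--Gin\'e spectral convergence theorem, with the bounded-diagonal hypothesis used exactly as you describe to handle the un-deleted diagonal), and the final algebraic factorisation of $\frac{C+\sigma^2}{\lambda_1 m}\kappa_{m,\sigma}$ into two factors each tending to $1$ is identical to the paper's argument.
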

It is worth noting that since $C$ is a random variable, the coefficient in the asymptotic formula is not fully deterministic. That is, different samples $X_i$ may change the constant in the scaling behaviour (but it will still grow linearly with $m$). Moreover, $C$ is bounded in the sense that $\exists B>0$ s.t $\forall\omega\in\Omega, 0\leq C(\omega)\leq B$. This gives fully deterministic lower and upper bounds on the condition number. Explicitly:
\begin{corollary}\label{cor:1}
    Under the assumptions of Theorem 1, we have:
    $$\forall\sigma>0,\kappa_{m,\sigma}\in\Theta(m)$$
    or to be more precise:
    \begin{align}
            &\forall\sigma>0,\exists C_1,C_2,M>0 \hspace{0.5em}s.t\hspace{0.5em}\forall m\geq M,\forall\omega \in \Omega,\\
            &C_1m\leq\kappa_{m,\sigma}(\omega)\leq C_2m
    \end{align}
\end{corollary}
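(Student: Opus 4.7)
The plan is to read Corollary~\ref{cor:1} directly off Theorem~\ref{Thm:1} by combining the almost sure asymptotic $\kappa_{m,\sigma}(\omega) \sim \lambda_1 m/(C(\omega)+\sigma^2)$ with the uniform bound $0 \leq C(\omega) \leq B$ stated in the paragraph preceding the corollary. The upper bound turns out not to require the full strength of Theorem~\ref{Thm:1} and can be made genuinely deterministic from the hypotheses alone, whereas the lower bound is where the asymptotic formula and the boundedness of $C$ play their role.

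For the upper bound I would bypass Theorem~\ref{Thm:1} entirely. Positive semidefiniteness of $K_m$ gives $\tilde{\lambda}_{m,m} \geq 0$, so $\tilde{\lambda}_{m,m}+\sigma^2 \geq \sigma^2$ for every $\omega$. The bounded-diagonal assumption provides $B' := \sup_{x\in\chi}k(x,x) < \infty$, hence $\tilde{\lambda}_{1,m} \leq \mathrm{tr}(K_m) = \sum_{i=1}^m k(X_i,X_i) \leq B'm$ pointwise in $\omega$. Combining the two estimates yields $\kappa_{m,\sigma}(\omega) \leq (B'm + \sigma^2)/\sigma^2$, which is at most $C_2 m$ for any $C_2 > B'/\sigma^2$ once $m$ is large enough, and this bound is uniform in $\omega$ with no probabilistic caveat.

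For the lower bound I would invoke Theorem~\ref{Thm:1}. Since $C(\omega) \leq B$ for every $\omega$, the random coefficient satisfies $\lambda_1/(C(\omega)+\sigma^2) \geq \lambda_1/(B+\sigma^2) > 0$, which is a strictly positive constant independent of the sample path. Setting $C_1 := \lambda_1/(2(B+\sigma^2))$ and using the almost sure limit from Theorem~\ref{Thm:1}, on the probability-one event on which the convergence holds we can find, for each $\omega$, an $M$ (depending on $\omega$) beyond which $\kappa_{m,\sigma}(\omega)/m \geq C_1$. Thus the linear lower bound holds almost surely.

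The main obstacle I expect to have to address is the apparent uniformity of $M$ implied by the ``$\forall \omega \in \Omega$'' quantifier in the corollary: Theorem~\ref{Thm:1} supplies only pointwise a.s.\ convergence, and genuinely bad sample paths (for instance samples clustering where $k$ is small, so that $\tilde\lambda_{1,m}$ fails to grow linearly) cannot be excluded in a strict set-theoretic sense. I would therefore present the statement in the same ``with probability~1'' spirit already used in Theorem~\ref{Thm:1}: the constants $C_1,C_2$ are deterministic thanks to the uniform bound on $C$ and the bounded diagonal, while $M$ is understood to depend on $\omega$ inside the probability-one set where Theorem~\ref{Thm:1}'s limit is attained.
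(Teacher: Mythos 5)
Your proof is correct, but it diverges from the paper's in a way worth noting. The paper deduces the corollary entirely from the asymptotic formula of Theorem~\ref{Thm:1}: it simply sandwiches the random coefficient via $0\leq C(\omega)\leq B$ to get $\frac{\lambda_1}{B+\sigma^2}m\leq\frac{\lambda_1}{C(\omega)+\sigma^2}m\leq\frac{\lambda_1}{\sigma^2}m$, so both the upper and lower linear bounds are inherited from the a.s.\ limit. You instead obtain the upper bound unconditionally and non-asymptotically, via $\tilde{\lambda}_{1,m}\leq\mathrm{tr}(K_m)\leq B'm$ and $\tilde{\lambda}_{m,m}+\sigma^2\geq\sigma^2$; this is a genuinely stronger statement (it holds for every $\omega$ and every $m$, with no appeal to Lemma~\ref{lem:1} or the Data Assumption), at the modest cost of a slightly worse constant, since $\lambda_1\leq B'$. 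Your lower bound coincides with the paper's. Your closing observation is also well taken: as written, the corollary quantifies $\exists M\,\forall m\geq M\,\forall\omega\in\Omega$, i.e.\ an $M$ uniform over sample paths, but the lower bound in both your argument and the paper's rests on the pointwise a.s.\ convergence of Theorem~\ref{Thm:1}, which only yields an $\omega$-dependent $M$ on a probability-one event. The paper's one-line deduction silently glosses over this; your proposal to read the lower inequality in the ``with probability 1, with $M=M(\omega)$'' sense (while keeping $C_1,C_2$ deterministic) is the honest reading, and your trace argument shows the upper inequality alone really is uniform in $\omega$.
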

\textit{This conclusively shows that for a wide variety of kernels and domains, the condition number of $K+\sigma^2I$ grows linearly with the size of the matrix when $\sigma>0$. Moreover this result does not depend on the distribution of the input data, or the particular sample obtained.}

We note that the constraint of $\sigma>0$ is not too restricting here as this is an incredibly common practice within the machine learning community (since often the matrix $K$ will be singular).

One may be tempted to claim that the limit $C(\omega)$ in Theorem 1 is always 0. Indeed, numerics for a wide variety of kernels seem to indicate that this is true but we construct a counterexample in the supplementary material. 

Now we consider the case $\sigma=0$. In this scenario we impose that the kernel $k$ be strictly positive definite (rather than semi definite as above) We do this to ensure that the matrix $K$ is invertible, as if it is not the condition number instead has to be defined using the Moore-Penrose Pseudoinverse \cite{Penrose_1955} of $K$. Under this assumption the condition number of $K$ is given by:
$$\kappa_m = \frac{\tilde{\lambda}_{1,m}}{\tilde{\lambda}_{m,m}}$$
We have the following result:
\begin{theorem}\label{Thm:2}
    Let $k\in L_2(\chi^2,\mu^2)$ be a measurable, symmetric, positive definite kernel function with bounded diagonal. Let $K_m$ denote the $m\times m$ Gram matrix formed from data as in Lemma 2. Then with probability 1, the condition number of $K_m$ scales at least linearly with $m$.

    To be precise, the event $E\subset\Omega$ given by:
    $$E = \{\omega\in\Omega|\kappa_m(\omega)\in\Omega(m)\}
    $$
    occurs with probability 1. 
\end{theorem}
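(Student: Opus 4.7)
The plan is to deduce Theorem \ref{Thm:2} directly from Theorem \ref{Thm:1} via an elementary monotonicity in the regulariser. Writing $a = \tilde{\lambda}_{1,m}$ and $b = \tilde{\lambda}_{m,m}$ with $a \geq b > 0$, the map $t \mapsto (a+t)/(b+t)$ is non-increasing on $[0,\infty)$, so $\kappa_{m,\sigma} \leq \kappa_m$ for every $\sigma > 0$ whenever $\kappa_m$ is finite. Hence any linear-in-$m$ lower bound on $\kappa_{m,\sigma_0}$ for a single convenient $\sigma_0 > 0$ transfers verbatim to $\kappa_m$.

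Next I would verify that Theorem \ref{Thm:1} applies. Strict positive definiteness of $k$ implies positive semi-definiteness and, combined with the $L_2$ set-up, forces $T_k$ to be non-zero; the operator is positive, so $\lambda_1 > 0$. Strict positive definiteness also underpins the almost-sure invertibility of $K_m$: under the mild assumption that $\mu$ is atomless, the samples $X_1,\dots,X_m$ are a.s.\ distinct, so the corresponding Gram matrix is strictly positive definite and $\tilde{\lambda}_{m,m} > 0$ on an event of probability one, making $\kappa_m$ well-defined a.s. Fixing any $\sigma_0 > 0$ and invoking Theorem \ref{Thm:1},
\begin{equation*}
\lim_{m\to\infty}\frac{\kappa_{m,\sigma_0}(\omega)}{m} \;=\; \frac{\lambda_1}{C(\omega)+\sigma_0^2} \;>\; 0
\end{equation*}
almost surely, where positivity of the limit uses the fact that $0\leq C(\omega)\leq B$ together with $\sigma_0 > 0$.

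Intersecting the two probability-one events above, for a.e.\ $\omega$ there exist $c(\omega) > 0$ and $M(\omega) \in \mathbb{N}$ such that $\kappa_m(\omega) \geq \kappa_{m,\sigma_0}(\omega) \geq c(\omega)\,m$ for all $m \geq M(\omega)$, which witnesses $\omega \in E$ and yields $\mathbb{P}(E) = 1$. The only step I expect to require genuine care is the almost-sure non-singularity of $K_m$, which hinges on reconciling the pointwise content of strict positive definiteness with the $L_2$ set-up (and on distinctness of the $X_i$); once that is in place, Theorem \ref{Thm:2} follows as a corollary of Theorem \ref{Thm:1} combined with the monotonicity $\kappa_{m,\sigma} \leq \kappa_m$.
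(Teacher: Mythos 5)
Your proof is correct but takes a genuinely different route from the paper's. The paper proves Theorem~\ref{Thm:2} directly from the two lemmas: it bounds $\kappa_m = \tilde{\lambda}_{1,m}/\tilde{\lambda}_{m,m} \geq \tilde{\lambda}_{1,m}/\tilde{\lambda}_{1,1}$ using the monotonicity of $\{\tilde{\lambda}_{m,m}\}$ from Lemma~\ref{lem:2}, and then applies the convergence $\frac{1}{m}\tilde{\lambda}_{1,m}\to\lambda_1$ from Lemma~\ref{lem:1} with an explicit $\epsilon$-argument, yielding the eventual lower bound $\kappa_m(\omega) \geq \frac{\lambda_1}{2\tilde{\lambda}_{1,1}(\omega)}m$ on a probability-one event. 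You instead observe that $t\mapsto(a+t)/(b+t)$ is non-increasing for $a\geq b>0$, so $\kappa_m \geq \kappa_{m,\sigma_0}$ for any fixed $\sigma_0>0$, and then import the asymptotic from Theorem~\ref{Thm:1}. Both arguments ultimately rest on the same two lemmas (yours reaches Lemma~\ref{lem:2} indirectly, through the limit $C(\omega)$ inside Theorem~\ref{Thm:1}), but your reduction is shorter and, since $0\leq C(\omega)\leq B$, it produces the constant $\lambda_1/(2(B+\sigma_0^2))$, which is deterministic --- arguably a cleaner statement than the paper's $\omega$-dependent constant $\lambda_1/(2\tilde{\lambda}_{1,1}(\omega))$, about which the authors remark that a deterministic bound is unavailable; note, though, that the onset $M(\omega)$ remains random in both treatments. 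Two caveats you handle at least as well as the paper: the almost-sure invertibility of $K_m$ (you correctly flag that pointwise strict positive definiteness only gives $\tilde{\lambda}_{m,m}>0$ when the samples are distinct, which the paper passes over), and the claim that $\lambda_1>0$ follows from strict positive definiteness --- this is asserted by the paper as well, but is not literally true (the paper's own counterexample in Section~\ref{supp:counterexample} is strictly positive definite with $T_k=0$), so both your proof and the paper's implicitly need $\lambda_1\neq 0$ as an additional hypothesis.
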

The lower bound in this case depends on the random variable $\frac{1}{\tilde{\lambda}_{1,1}(\omega)}$, see \nameref{Pf:thm2}. However, this is still sufficient to conclude that with probability 1, the condition number will scale at least linearly with $m$ for any sample.

Whilst positive definiteness of the kernel is enough to ensure $\tilde{\lambda}_{1,1}(\omega)>0, \forall\omega\in\Omega$ it is not enough to guarantee it is bounded away from 0. This means that we cannot achieve a deterministic lower bound (as in the $\sigma>0$ case) unless we place further regularity conditions on the kernel.

Together, these results are enough to show that the currently proposed quantum algorithms for GPR scale quadratically in the number of datapoints in a wide number of scenarios. In particular all three algorithms in table \ref{table1} provide no exponential speedup when compared to a classical approach. They also allow us to show the following corollary:
\begin{corollary}\label{cor:2}
    Let $k\in L_2(\chi^2,\mu^2)$ be a measurable symmetric positive (semi)-definite kernel function with bounded diagonal. Assume the matrices $K_m$ are constructed according to the \nameref{ass:data} and fix $\sigma \geq 0$. Then let $\kappa_F$ denote the condition number of $K_m + \sigma^2I$ with respect to the Frobenius norm. That is,
    $$\kappa_F := \norm{K_m+\sigma^2I}_F\norm{(K_m+\sigma^2I)^{-1}}_2$$
    where $\norm{\cdot}_F$ and $\norm{\cdot}_2$ denote the Frobenius and spectral norms respectively. Then:
    $$\kappa_F \in \Omega(m)$$
\end{corollary}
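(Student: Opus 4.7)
The plan is to bound $\kappa_F$ below by the spectral condition number $\kappa_{m,\sigma}$, and then invoke Theorems \ref{Thm:1} and \ref{Thm:2}. The key observation is the elementary norm inequality $\|A\|_F \geq \|A\|_2$, which follows from comparing $\|A\|_F^2 = \sum_i \sigma_i(A)^2$ with $\|A\|_2^2 = \max_i \sigma_i(A)^2$.

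First I would apply this inequality to the positive semi-definite matrix $K_m + \sigma^2 I$, whose singular values coincide with its eigenvalues $\tilde{\lambda}_{i,m} + \sigma^2$. This yields $\|K_m + \sigma^2 I\|_F \geq \tilde{\lambda}_{1,m} + \sigma^2$. Combining with the standard identity $\|(K_m + \sigma^2 I)^{-1}\|_2 = 1/(\tilde{\lambda}_{m,m} + \sigma^2)$ for positive definite matrices, one obtains
\begin{equation}
\kappa_F \;=\; \|K_m + \sigma^2 I\|_F \cdot \|(K_m + \sigma^2 I)^{-1}\|_2 \;\geq\; \frac{\tilde{\lambda}_{1,m} + \sigma^2}{\tilde{\lambda}_{m,m} + \sigma^2} \;=\; \kappa_{m,\sigma}.
\end{equation}

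Next I would split into cases on $\sigma$. For $\sigma > 0$, positive semi-definiteness together with $\lambda_1 \neq 0$ (which holds for any nontrivial such kernel) puts us under the hypotheses of Theorem \ref{Thm:1}, and Corollary \ref{cor:1} then gives $\kappa_{m,\sigma} \in \Omega(m)$ deterministically for large $m$. For $\sigma = 0$, the definition of $\kappa_F$ requires invertibility of $K_m$, so we must read the ``(semi)'' in the hypothesis as strict positive definiteness in this case, and Theorem \ref{Thm:2} then gives $\kappa_m \in \Omega(m)$ with probability one. In either case the bound $\kappa_F \geq \kappa_{m,\sigma}$ propagates to give $\kappa_F \in \Omega(m)$ almost surely.

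There is essentially no substantive obstacle: the corollary is a direct consequence of the two main theorems together with the inequality $\|\cdot\|_F \geq \|\cdot\|_2$. The only point requiring a brief comment is the interpretation of the $\sigma = 0$ case, where strict positive definiteness of the kernel is needed both to make $\kappa_F$ well-defined and to apply Theorem \ref{Thm:2}.
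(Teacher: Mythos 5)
Your proposal is correct and follows essentially the same route as the paper: both bound $\kappa_F$ below by the spectral condition number via the inequality $\norm{A}_2 \leq \norm{A}_F$ and then invoke Theorem \ref{Thm:1} (for $\sigma>0$) and Theorem \ref{Thm:2} (for $\sigma=0$). Your explicit remark that the $\sigma=0$ case requires strict positive definiteness for $\kappa_F$ to be well-defined is a point the paper leaves implicit, but it does not change the argument.
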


The collection of kernels, distributions and domains that these results apply to is broad. For example, any bounded kernel will automatically satisfy both $L_2(\mu)$ integrability and boundedness of the diagonal. Moreover, any continuous kernel on a compact domain will also satisfy these conditions by the extreme value theorem. This means that the result applies to a large number of the kernel functions used in machine learning applications. For example, the RBF kernel, the Matern kernel and the rational quadratic kernel \cite{rasmussenGaussianProcessesMachine2005} fulfill the conditions of this result on an arbitrary domain with arbitrary data distribution. However, unbounded continuous kernels, such as dot product and polynomial kernels, will not satisfy these conditions on non-compact domains (like $\mathbb{R}^d$).

We note that kernels on finite domains are automatically bounded and so fulfill the requirements of all our theorems when i.i.d. data is assumed. This situation arises in the context of graph kernels \cite{kriegeSurveyGraphKernels2020} and string kernels \cite{qiStringKernelsConstruction2022}.

In addition to our results on the condition number, closely related arguments can be used to show the following about the sparsity and Frobenius norm of kernel matrices.
\begin{theorem}\label{thm:3}
    Let $k\in L_2(\chi^2,\mu^2)$ be a measurable, bounded, symmetric, positive semi-definite kernel function with $\lambda_1\neq0$. Further let $s(m)$ denote the sparsity of the gram matrix $K_m$, that is the maximum number of non-zero entries in a row. Then with probability 1 we have:
    $$s(m) \in \Theta(m)$$
\end{theorem}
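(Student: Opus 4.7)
The plan is to prove this by bounding $s(m)$ above trivially and below by showing that some fixed row of $K_m$ acquires a linear number of non-zero entries almost surely. The upper bound $s(m) \leq m$ is immediate since each row has $m$ entries. For the lower bound, the strategy is: identify a positive-measure set of ``good'' inputs $x$ such that $k(x,\cdot)$ is non-zero on a set of positive measure; argue that with probability $1$ some sample $X_{i_0}$ falls in this set; then apply the strong law of large numbers to the indicators $\mathbb{1}[k(X_{i_0}, X_j) \neq 0]$ to get a linear lower bound on the number of non-zero entries in row $i_0$.

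In detail, first I would define the non-zero set $Z = \{(x,y)\in\chi^2 : k(x,y)\neq 0\}$ and translate the hypothesis $\lambda_1 \neq 0$ into the statement $\mu^2(Z) > 0$: if $\mu^2(Z) = 0$ then $k = 0$ in $L_2(\chi^2,\mu^2)$, so $T_k$ is the zero operator and every $\lambda_i$ vanishes, contradicting $\lambda_1 \neq 0$. Next, set $p(x) = \mu(\{y \in \chi : k(x,y)\neq 0\})$; by Fubini, $\int_\chi p(x)\, d\mu(x) = \mu^2(Z) > 0$, so there exists $\delta > 0$ with $\mu(A_\delta) > 0$ where $A_\delta = \{x : p(x) \geq \delta\}$.

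Since the $X_i$ are i.i.d.\ $\mu$-distributed and $\mu(A_\delta) > 0$, the event $\{X_i \in A_\delta \text{ for some } i\}$ occurs with probability $1$ (in fact infinitely often, by the second Borel--Cantelli lemma). Fix such an index $i_0 = i_0(\omega)$. Conditional on $X_{i_0}$, the random variables $\mathbb{1}[k(X_{i_0}, X_j) \neq 0]$ for $j \neq i_0$ are i.i.d.\ Bernoulli with mean $p(X_{i_0}) \geq \delta$. The Data Assumption guarantees that row $i_0$ of $K_m$ is well-defined for all $m \geq i_0$ and that its entries only accumulate as $m$ grows. By the strong law of large numbers applied conditionally on $X_{i_0}$,
\begin{equation}
\frac{1}{m}\sum_{j=1}^m \mathbb{1}[k(X_{i_0},X_j)\neq 0] \;\longrightarrow\; p(X_{i_0}) \;\geq\; \delta \qquad (\text{a.s.}),
\end{equation}
so for all sufficiently large $m$ the number of non-zero entries in row $i_0$ is at least $(\delta/2)m$. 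Hence $s(m) \geq (\delta/2)m$ eventually, almost surely, which combined with $s(m)\leq m$ yields $s(m)\in\Theta(m)$.

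The main subtlety is the step linking $\lambda_1 \neq 0$ to $\mu^2(Z) > 0$; this is where the spectral theory recalled earlier enters and where one must be careful with the ``a.e.'' nature of $L_2$ kernels. The rest is standard measure-theoretic and probabilistic bookkeeping: the Data Assumption is exactly what allows one to fix a single row index $i_0$ and let $m$ grow without redrawing samples, and the SLLN application is routine once $X_{i_0}\in A_\delta$ has been secured.
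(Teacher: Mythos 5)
Your argument is correct in substance but takes a genuinely different route from the paper. The paper's proof is spectral: it applies the Gershgorin circle theorem to bound $\tilde{\lambda}_{1,m} \leq (s(m)+1)B$ using the boundedness of $k$, and then invokes Lemma~\ref{lem:1} (the Koltchinskii--Gin\'e convergence $\tfrac{1}{m}\tilde{\lambda}_{1,m}\to\lambda_1>0$ a.s.) to force $s(m)\in\Omega(m)$. Your proof bypasses the spectral machinery entirely: Fubini turns $\lambda_1\neq 0$ (hence $k$ not a.e.\ zero) into a positive-measure set $A_\delta$ of rows that are non-zero on a set of measure at least $\delta$, and the SLLN then fills such a row with $\gtrsim \delta m$ non-zero entries. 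This is more elementary and in fact needs strictly weaker hypotheses --- you never use boundedness, positive semi-definiteness, or the eigenfunction expansion, only measurability, $k\in L_2$, and $k\neq 0$ in $L_2$ --- so your argument proves a more general statement than the theorem as stated.

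Two points need attention. First, you invoke the Data Assumption to fix the row $i_0$ and let $m$ grow, but the paper explicitly notes that Theorem~\ref{thm:3} does \emph{not} rely on that assumption (its proof goes through Lemma~\ref{lem:1}, which is assumption-free). Under the Data Assumption your argument is fine; to match the full strength of the claim you would need, e.g., a Hoeffding/Borel--Cantelli argument showing that for each $m$ independently some row of the freshly sampled $K_m$ has $\geq \delta m/2$ non-zero entries, eventually a.s. Second, conditioning on $X_{i_0}$ where $i_0=i_0(\omega)$ is a random index is informal as written; the clean version fixes each deterministic $i$, defines the event
\begin{equation}
E_i = \{X_i\in A_\delta\}\cap\Bigl\{\liminf_{m\to\infty}\tfrac{1}{m}\textstyle\sum_{j\leq m}\mathbb{1}[k(X_i,X_j)\neq 0]\geq\delta\Bigr\},
\end{equation}
notes $P(E_i\mid X_i\in A_\delta)=1$ by the conditional SLLN, and concludes $P\bigl(\bigcup_i E_i\bigr)=1$ from $P\bigl(\bigcup_i\{X_i\in A_\delta\}\bigr)=1$. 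Both gaps are routine to close; the core idea is sound.
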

\begin{theorem}\label{thm:4}
    Let $k\in L_2(\chi^2,\mu^2)$ be a measurable, symmetric, positive semi-definite kernel function with bounded diagonal. 
    Moreover suppose $\lambda_1 \neq 0$ and let $\norm{\cdot}_F$ denote the Frobenius norm. Then $\forall\sigma\geq0$, we have the following with probability 1.
    $$\norm{K_m+\sigma^2}_F \in \Omega(m)$$
\end{theorem}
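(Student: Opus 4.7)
The plan is to lower-bound the Frobenius norm by a single eigenvalue and then invoke the almost-sure linear growth of the top eigenvalue $\tilde\lambda_{1,m}$ that already underlies Theorem \ref{Thm:1}.

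First I would observe that, because $K_m$ is positive semi-definite (as the Gram matrix of a PSD kernel) and $\sigma^2\geq 0$, every eigenvalue of $K_m+\sigma^2 I$ is non-negative, so
$$\|K_m+\sigma^2 I\|_F^2 \;=\; \sum_{i=1}^m (\tilde\lambda_{i,m}+\sigma^2)^2 \;\geq\; (\tilde\lambda_{1,m}+\sigma^2)^2 \;\geq\; \tilde\lambda_{1,m}^{\,2},$$
which gives the deterministic bound $\|K_m+\sigma^2 I\|_F \geq \tilde\lambda_{1,m}$. This step is free and reduces the theorem to the claim $\tilde\lambda_{1,m}\in\Omega(m)$ with probability 1.

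For the remaining step I would invoke the classical almost-sure convergence of empirical Gram-matrix eigenvalues to the spectrum of the integral operator (of Koltchinskii--Gin\'e type): under the stated hypotheses (symmetric PSD kernel in $L_2$ with bounded diagonal, which together make $k$ bounded and $T_k$ trace class), $\tilde\lambda_{1,m}/m \to \lambda_1$ almost surely. This is exactly the ingredient producing the numerator $\lambda_1 m$ in the asymptotic of Theorem \ref{Thm:1}, so no new analysis is needed here. Since $k$ is PSD we have $\lambda_i \geq 0$ for all $i$, so $\lambda_1\neq 0$ forces $\lambda_1>0$; thus on a probability-one event there is some $M(\omega)$ with $\tilde\lambda_{1,m}(\omega) \geq \tfrac{1}{2}\lambda_1 m$ for all $m\geq M(\omega)$. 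Combined with the first-step bound this gives $\|K_m+\sigma^2 I\|_F \geq \tfrac{1}{2}\lambda_1 m$, i.e.\ $\Omega(m)$.

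The main obstacle is the spectral-convergence statement $\tilde\lambda_{1,m}/m\to\lambda_1$ a.s., but this is precisely the machinery already deployed for Theorem \ref{Thm:1}, so Theorem \ref{thm:4} follows essentially by the Frobenius-versus-spectral-norm comparison above. As a consistency check (not needed for the claim) I would note that the $\Omega(m)$ bound is in fact tight: PSD'ness of $K_m+\sigma^2 I$ gives the matching upper bound $\|K_m+\sigma^2 I\|_F \leq \mathrm{tr}(K_m+\sigma^2 I)$, and the strong law of large numbers applied to the i.i.d.\ diagonal entries $k(X_i,X_i)$ yields $\mathrm{tr}(K_m) \in O(m)$ almost surely, so in fact $\|K_m+\sigma^2 I\|_F \in \Theta(m)$.
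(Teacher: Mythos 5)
Your proof is correct and follows essentially the same route as the paper's: lower-bound $\norm{K_m+\sigma^2 I}_F$ by $\tilde{\lambda}_{1,m}$ using positive semi-definiteness, then apply the almost-sure convergence $\frac{1}{m}\tilde{\lambda}_{1,m}\to\lambda_1>0$ from Lemma~\ref{lem:1}. The extra observation that the bound is tight (so that the norm is in fact $\Theta(m)$) is a nice consistency check but, as you note, not needed.
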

This is relevant to our discussion on GPR as Zhao's algorithm \cite{zhaoQuantumassistedGaussianProcess2019} depends quadratically on the sparsity of $K+\sigma_n^2I$. Also note that theorems \ref{thm:3}\&\ref{thm:4} do not rely on the \nameref{ass:data}.

In each of the theorems above we have assumed $\lambda_1\neq 0$ (indirectly in the case of theorem \ref{Thm:2} as this is implied by positive definiteness). This is a reasonable assumption as if $\lambda_1=0$ then it follows that all eigenvalues of $T_k$ are $0$ and hence the kernel is equal to $0$ $\mu^2$ almost everywhere on $\chi^2$. Therefore, this case is of little practical interest.
\subsection{Numerical Results}\label{section 2.2}
We provide numerical justification of the asymptotic formula for the condition number $\kappa_{m,\sigma}$ from theorem \ref{Thm:1}. In order to do this we require a kernel and measure for which an explicit formula of the first eigenvalue is known. A simple candidate for this is the RBF kernel with Gaussian measure, which is given by:
$$
k(x,y) = \exp\left(-\frac{1}{2l^2}\norm{x-y}^2\right)$$
where $l$ is a parameter representing the characteristic length scale of the kernel and \\
$x,y \in \chi \subset \mathbb{R}^d$. 

In the case of a $1d$ normal distribution there is an exact spectral decomposition of $T_k$ from Zhu et al. \cite{zhuGaussianRegressionOptimal1997}. 

We are able to generalise their results to a spectral decomposition of $T_k$ in the case of a non-degenerate, multivariate normal distribution. Recall, this corresponds to drawing the sample data $\{X_i\}_{i=1}^m$ for $(K)_{i,j} = k(X_i,X_j)$ from a distribution $N(\vec{v},\Sigma)$. For the proof of this generalisation see the supplementary material \ref{supp:2}. The result states the following; let $\{\nu_i\}_{i=1}^d$ denote the eigenvalues of $\Sigma$, note they are all positive as $\Sigma$ is positive definite. Then for the RBF kernel with the above normal density, the eigenvalues of the operator $T_k$ are indexed by a multi index $\vec{k} \in\mathbb{N}^d$ and are given by:
    $$\lambda_{\Vec{k}} = \prod_{i=1}^d\left(\frac{2a_i}{A_i}\right)^{1/2}B_i^{k_i}$$
where:
\begin{align}
    a_i &= \frac{1}{4\nu_i}\\
    b &= \frac{1}{2l^2}\\
    c_i &= \sqrt{a_i^2+2a_ib}\\
    A_i &= a_i + b + c_i\\
    B_i &= \frac{b}{A_i}
\end{align}
In particular we see that the largest eigenvalue occurs when $\vec{k}=\vec{0}$. It is a well known heuristic in the literature that the eigenvalues of $K_m$ decay to 0, as the size of the matrix ($m$) increases, for the RBF kernel with normally distributed input data. This can be easily verified numerically and motivates our choice of $C(\omega)\equiv0$ in the formula from theorem \ref{Thm:1}. This then gives the following predicted asymptotic behaviour for the condition number:
$$\kappa_{m,\sigma} \sim \frac{m}{\sigma_n^2}\prod_{i=1}^d\left(\frac{2a_i}{A_i}\right)^{1/2}$$
To perform our numerical analysis, for each $m$ we construct an $m\times m$ Gram matrix from normally distributed data and compute the condition number of $K+\sigma_n^2 I$ directly, we then plot this for different values of $m$. We plot on the same graph the predicted complexity formula above also. Note that both figures \ref{fig: 1} and \ref{fig: 2} are generated from a single realisation of the data. Multiple realisations show identical behaviour. 
\begin{figure}[ht]
    \centering
    \includegraphics[width=\textwidth]{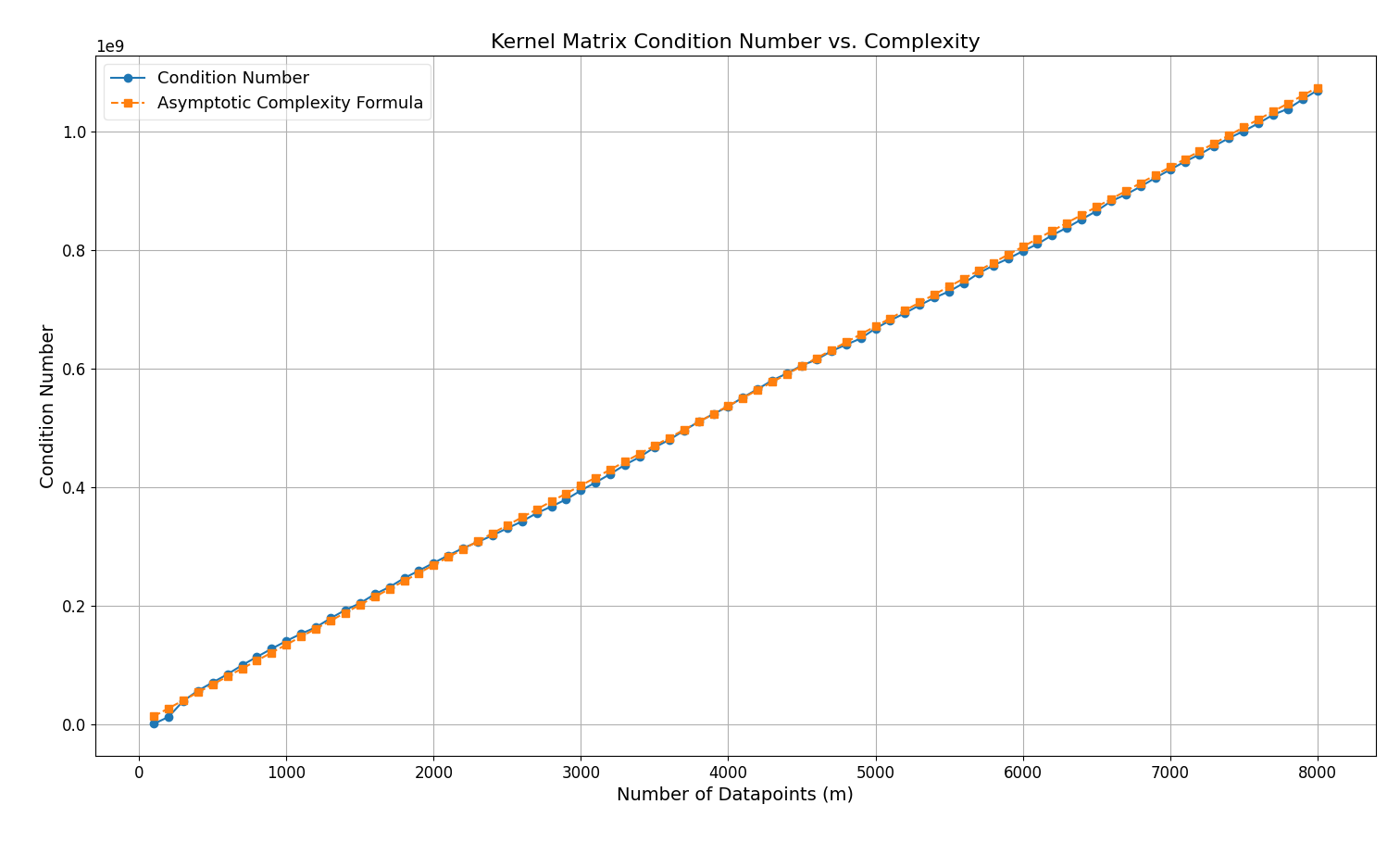}
    \caption{Condition numbers of Gram matrices formed from the RBF kernel with multivariate normal data vs predicted asymptotic behaviour. Parameter values: $\sigma_n=10^{-3}$, $l=1$.}
    \label{fig: 1}
\end{figure}

Figure \ref{fig: 1} shows the comparison between the predicted condition number and the actual condition number for $3d$ normally distributed data with mean $\vec{v}=\vec0$ and covariance matrix $\Sigma$, where 
\[
\Sigma = 
\begin{pmatrix}
1 & 0 & 0 \\
0 & 2.5 & -0.5 \\
0 & -0.5 & 2.5
\end{pmatrix}
\]

We see from figure \ref{fig: 1} that the asymptotic formula provides a very close approximation even for small values of $m$. 

As our results do not apply to unbounded kernels, we also perform numerics for the dot product kernel, $k(x,x') = x \cdot x'$, with the same input distribution as above. We plot the condition numbers as before and then perform a linear regression.

\begin{figure}[ht]
    \centering
    \includegraphics[width=\textwidth]{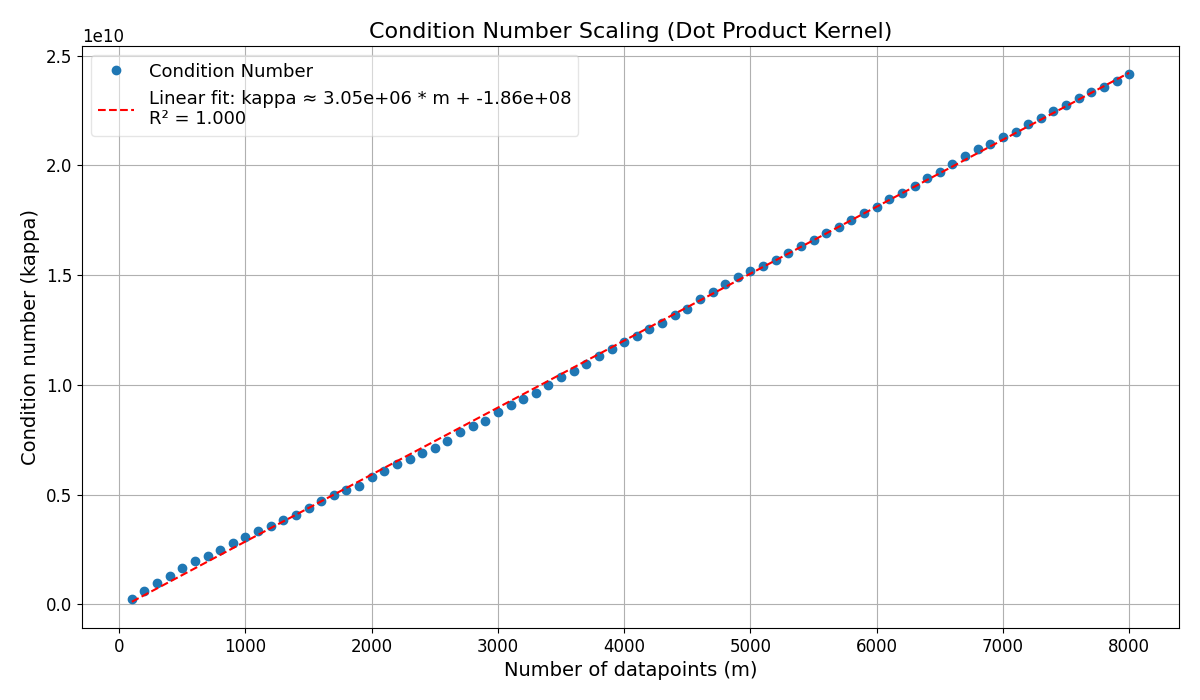}
    \caption{Condition number growth of Gram matrices formed using the dot product kernel with multivariate normal data. Parameter values: $\sigma_n=10^{-3}$.}
    \label{fig: 2}
\end{figure}

$R^2$ denotes the coefficient of determination and thus the value of 1 indicates a very strong linear correlation. It is clear that in this simulation the condition number scales linearly with the number of datapoints, despite the kernel being unbounded. This indicates that the linear scaling of $\kappa$ may extend beyond the assumptions in this paper. We therefore think it would be interesting to see whether similar results can be proved for unbounded kernels.

Note that both figures \ref{fig: 1} and \ref{fig: 2} were generated by drawing data in a way that enforces the \nameref{ass:data}. However we also performed numerics for the RBF kernel where new data was generated at each timestep. This plot can be seen in supplementary material \ref{supp:3} and it is clear that the linear scaling still holds. This provides evidence that similar results may be true without the \nameref{ass:data}.

\section{Discussion}\label{discussion}
We have proven that under a mild data assumption, a square integrable kernel with bounded diagonal and i.i.d. data will give a regularised Gram matrix whose condition number scales linearly in $m$ and unregularised Gram matrix whose condition number scales at least linearly in $m$. Moreover, if the kernel is additionally bounded then the sparsity also scales linearly in $m$. Under these assumptions, we can put the scaling of all three quantum GPR algorithms in terms of $m$ and obtain the complexities in Table \ref{table2}.
\begin{table}[htbp]
\centering
\resizebox{0.5\textwidth}{!}{%
\renewcommand{\arraystretch}{2}
\begin{tabular}{|c|c|}
\hline
\textbf{Algorithm} & \textbf{Updated Time Complexity} \\
\hline
Zhao\cite{zhaoQuantumassistedGaussianProcess2019} & $O(\text{log}(m)m^4/\epsilon)$ \\
\hline
Chen \cite{chenQuantumAlgorithmGaussian2022} & $O(\frac{1}{\sqrt{P_k}}d\text{log}(\frac{d}{\delta})m^2\text{log}(m)\epsilon^{-3})$\\
\hline
Farooq \cite{farooqQuantumassistedHilbertspaceGaussian2024a} & $O(\text{poly}(\text{log}mM)\text{log}(M)\epsilon^{-3}m^2)$ \\
\hline
Classical Cholesky & $O(m^3)$\\
\hline
\end{tabular}%
}
\caption{Comparison of updated complexities for quantum GPR algorithms, in the case of bounded kernels with i.i.d. data and $\sigma_n >0$.}
\label{table2}
\end{table}

We see that in this scenario these algorithms scale polynomially with the number of datapoints. Moreover, the best options are only linearly better than the standard classical approach. Therefore we can conclude that, for a wide variety of kernels, these algorithms offer at most a speedup from $O(m^3)$ to $O(m^2$log($m))$. We observe that this reduction in speedup occurs even if QRAM \cite{jaquesQRAMSurveyCritique2023} access is allowed for the quantum models, which is a reasonably strong assumption. This provides an argument against the construction of a QRAM in the case of quantum GPR.

The comparison in Table \ref{table2} is with exact matrix inversion but there are classical methods that give approximate solutions in similar time to the complexities listed. Reduced rank hilbert space methods \cite{solin_hilbert_2020} scale as $O(mM^2)$ where $M$ is the number of basis functions used for the approximation. Note, under the assumptions of theorems \ref{Thm:1} and \ref{thm:3} conjugate gradient methods \cite{ConjugateGradientDescent} scale as $O(m^3\log(1 / \epsilon))$. A review of further classical techniques can be found in \cite{liu_when_2019}.

The discussion for Zhao's algorithm above applies only to its original formulation using HHL. If a different QLSS is used then a better time complexity can be achieved. However, even the state of the art QLSS \cite{costa2021optimalscalingquantumlinear} has time complexity linear in $\kappa$ and so will also scale at least linearly with the number of datapoints when applied to GPR.

There exist QLSS algorithms that scale independently of the condition number and sparsity of a matrix \cite{wang_qubit-efficient_2024}. This algorithm relies on having a decomposition for the matrix in the Pauli basis. The runtime then depends quadratically on the $l_1$ norm of the coefficients in this decomposition. As a result, this algorithm avoids many of the issues we discuss in this paper and so would be an interesting candidate for a GPR algorithm.

We note that our results do leave the possibility for exponential advantage in the case of unbounded kernels, such as the dot product and polynomial kernels. It is not immediately clear how the condition number scales in this setting but numerics for the dot product kernel seem to indicate similar behaviour, see section \ref{section 2.2}.

There are similar possibilities in the case of non i.i.d data. Whilst i.i.d. data is a widely used assumption, it is clear that truncated kernels without i.i.d data can lead to sparsity and condition number that are polylogarithmic in $m$. These particular scenarios are therefore good candidates for quantum advantage.

There have been attempts to overcome poor condition number scaling in the literature, with a notable example being preconditioners \cite{claderPreconditionedQuantumLinear2013, shaoQuantumCirculantPreconditioner2018}. However, several quantum preconditioners are affected by our results also. For example, \cite{claderPreconditionedQuantumLinear2013} depends polynomially on the sparsity of $K$ and \cite{shaoQuantumCirculantPreconditioner2018} is polynomial in the Frobenius norm. Theorems \ref{thm:3} \& \ref{thm:4} therefore imply that a GPR algorithm relying on these will still scale polynomially in $m$. We note that there is broad literature on classical preconditioners \cite{wengerPreconditioningScalableGaussian2022} and so a hybrid approach incorporating classical preconditioners may lead to an overall speedup.

There are solutions to the condition number problem in some closely related areas. For example, Quantum Support Vector Machines \cite{rebentrost_quantum_2014} have a time complexity that depends on the `effective condition number'. Essentially, by only considering the matrix eigenvalues above some threshold it is possible to circumvent scaling with respect to the actual condition number. Our results show that this technique is unsuitable for GPR, as it will lead to an error scaling linearly in $m$. This is because the error is given by the square sum of the excluded eigenvalues, each of which is $O(m)$ by lemma \ref{lem:1}.

Due to the connection between GPR and other kernel methods, it is natural to consider how these results will impact the area of quantum kernels \cite{havlicek_supervised_2019,schuld_quantum_2019}. Provable advantage can be demonstrated by quantum kernels for artificial problems \cite{liu_rigorous_2021} but it is still unclear whether they exhibit quantum advantage in applicable scenarios \cite{huangPowerDataQuantum2021}. 

As our treatment of the domain $\chi$ has been completely general, our result does apply to Gram matrices formed from evaluating a quantum kernel. However, quantum kernel methods typically use a quantum device for the kernel evaluation and then perform the relevant linear algebra classically. In this case we hope that our results can help provide an educated choice as to which classical algorithm will be optimal to perform any required matrix inversion. In particular, when making this choice any condition number dependence should be treated as at least linear in the size of the matrix. 

We finally note that there is a connection between kernel methods and neural networks via the neural tangent kernel (NTK) \cite{jacotNeuralTangentKernel2020,leeWideNeuralNetworks2019}. Specifically, in the infinite width limit the NTK is deterministic and so our result will apply when it is bounded. In this scenario, the linear scaling of the condition number may affect the rate of convergence for training a neural network via gradient descent, as this depends on the smallest Gram matrix eigenvalue \cite{duGradientDescentProvably2019}.
\section{Methods}
\subsection{Statement and Proof of Lemmas}
In order to prove our results we require the use of two lemmas, which we state below:
\begin{lemma}\label{lem:1}
Let \( k \in L_2(\chi^2,\mu^2) \) be symmetric, measurable, positive semi-definite, and have bounded diagonal. Then,
\[
\lim_{m \to \infty} \left\| \left\{ \frac{1}{m} \tilde{\lambda}_{n,m} - \lambda_n \right\}_{n=1}^\infty \right\|_{l_2} = 0 \quad \text{(a.s.)}
\]
In particular, for any fixed \( n \in \mathbb{N} \),
\[
\frac{1}{m} \tilde{\lambda}_{n,m} \to \lambda_n \quad \text{(a.s.)} \quad \text{as } m \to \infty
\]
where convergence in both limits occurs almost surely.
\end{lemma}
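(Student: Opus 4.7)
My approach is to prove two statements and combine them via the Hilbert-space fact that, in $\ell_2$, coordinate-wise convergence of a bounded sequence together with convergence of norms implies strong convergence. Specifically, I would establish, almost surely: (i) $\|\{\tilde{\lambda}_{n,m}/m\}_n\|_{\ell_2} \to \|\{\lambda_n\}_n\|_{\ell_2}$; and (ii) $\tilde{\lambda}_{n,m}/m \to \lambda_n$ for each fixed $n$. Given (i), the sequences $\{\tilde{\lambda}_{n,m}/m\}_n$ are uniformly bounded in $\ell_2$, so (ii) lifts to weak $\ell_2$ convergence, and combining with (i) gives the desired strong convergence.

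Step (i) is a direct V-statistic computation:
\begin{align}
\left\| \left\{ \tfrac{\tilde{\lambda}_{n,m}}{m} \right\}_n \right\|_{\ell_2}^2 = \frac{1}{m^2}\|K_m\|_F^2 = \frac{1}{m^2}\sum_{i,j=1}^m k(X_i,X_j)^2.
\end{align}
Since $k\in L_2(\chi^2,\mu^2)$ implies $k^2\in L_1(\chi^2,\mu^2)$, the strong law of large numbers for V-statistics gives almost sure convergence of the right-hand side to $\int k^2\,d\mu^2 = \|T_k\|_{HS}^2 = \sum_n \lambda_n^2$.

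For step (ii) I would pass to the reproducing kernel Hilbert space $\mathcal{H}_k$ (which exists by positive semi-definiteness) and consider the self-adjoint Hilbert--Schmidt operators on $\mathcal{H}_k$
\begin{align}
\hat{T}_m = \frac{1}{m}\sum_{i=1}^m k(X_i,\cdot)\otimes k(X_i,\cdot), \qquad T = \mathbb{E}\bigl[k(X,\cdot)\otimes k(X,\cdot)\bigr],
\end{align}
where $\otimes$ is the tensor product in $\mathcal{H}_k$. Factoring $\hat{T}_m = B^*B$ and $K_m/m = BB^*$ for the sampling operator $B f = \tfrac{1}{\sqrt{m}}(f(X_i))_{i=1}^m$, and similarly $T = \iota^*\iota$ and $T_k = \iota\iota^*$ for the inclusion $\iota:\mathcal{H}_k\hookrightarrow L_2(\chi,\mu)$, shows that the non-zero eigenvalues of $\hat{T}_m$ and $T$ coincide with those of $K_m/m$ and $T_k$ respectively. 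Mirsky's infinite-dimensional Hoffman--Wielandt inequality for compact self-adjoint operators then yields
\begin{align}
\left\| \left\{ \tfrac{\tilde{\lambda}_{n,m}}{m} - \lambda_n \right\}_n \right\|_{\ell_2} \leq \|\hat{T}_m - T\|_{HS},
\end{align}
so it suffices to show $\|\hat{T}_m - T\|_{HS}\to 0$ almost surely.

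I expect this Hilbert--Schmidt law of large numbers to be the main obstacle, and it is where the bounded-diagonal hypothesis enters critically. The i.i.d.\ rank-one summands $k(X_i,\cdot)\otimes k(X_i,\cdot)$ are not necessarily bounded in operator norm, but their HS norms equal $k(X_i,X_i)$, which is uniformly bounded by hypothesis; a Bochner-space strong law of large numbers in the separable Hilbert space of HS operators on $\mathcal{H}_k$ then applies. Incidentally, the Mirsky bound already implies the full $\ell_2$ conclusion of the lemma, formally subsuming step (i); I keep the two steps separate to emphasise that (i) follows from elementary V-statistic theory and that all the operator-theoretic machinery is isolated in (ii).
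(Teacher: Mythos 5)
Your route is genuinely different from the paper's. The paper does not re-derive spectral convergence from scratch: it invokes Theorem 3.1 of Koltchinskii and Gin\'e, which gives almost-sure convergence of the ordered spectrum of the off-diagonal-deleted matrix to the spectrum of $T_k$ in the $\ell_2$ rearrangement metric whenever $T_k$ is Hilbert--Schmidt, and then observes that the same proof covers the full matrix $K_m$ once the diagonal correction $\lim_R\lim_m \frac{1}{m^2}\sum_i (k-k_R)^2(X_i,X_i)$ vanishes, which the bounded-diagonal hypothesis guarantees; the coordinatewise statement then follows exactly as in your final display. You instead reconstruct the result via the RKHS embedding, the $BB^*$ versus $B^*B$ identification of the nonzero spectra of $K_m/m$ and $\hat T_m$ (and of $T_k$ and $\iota^*\iota$), the infinite-dimensional Hoffman--Wielandt inequality, and a Banach-space strong law. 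This is essentially the Rosasco--Belkin--De Vito style argument and, where it applies, it is cleaner and more self-contained than citing Koltchinskii--Gin\'e; your remark that the Mirsky bound subsumes step (i) is correct, and step (i) itself is a sound V-statistic computation (note, though, that the bounded diagonal is already needed there to kill the diagonal term $\frac{1}{m^2}\sum_i k(X_i,X_i)^2$, not only in step (ii)).

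The gap is in step (ii), and it is not resolved by the bounded diagonal alone. The Hilbert-space strong law you invoke (Mourier's theorem) requires the summands $k(X_i,\cdot)\otimes k(X_i,\cdot)$ to be Bochner-measurable, i.e.\ essentially separably valued, random elements of the Hilbert--Schmidt operators on $\mathcal{H}_k$; equivalently, the feature map $x\mapsto k(x,\cdot)\in\mathcal{H}_k$ must be measurable with essentially separable range. Under the lemma's hypotheses ($k$ merely measurable, square-integrable, pointwise positive semi-definite, bounded diagonal) this can fail, and relatedly the identification of $T=\iota^*\iota$ with the $L_2$ integral operator $T_k$ is delicate because $\mathcal{H}_k$ is built from the pointwise values of $k$ while the $\lambda_n$ depend only on its $L_2(\mu^2)$ class. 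The paper's own counterexample kernel in Supplementary S1, $k(x,y)=(1+e^{-x^2})\mathbf{1}_{x=y}$, satisfies every hypothesis of the lemma yet has a nonseparable RKHS whose feature map has uncountable, uniformly separated range, so $\mathbb{E}\bigl[k(X,\cdot)\otimes k(X,\cdot)\bigr]$ does not exist as a Bochner integral and the strong law does not apply (the lemma still holds for that kernel, but your argument does not reach it). To close the gap you either need an additional regularity hypothesis --- e.g.\ separable $\mathcal{H}_k$ with measurable feature map, which covers continuous kernels and everything the paper uses downstream --- or a separate elementary treatment of the diagonal discrepancy, which is precisely what the paper's reduction to Koltchinskii--Gin\'e accomplishes.
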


\begin{lemma}\label{lem:2}
Under the \nameref{ass:data} regarding the construction of the matrices \( \{K_m\}_{m=1}^\infty \) and the assumptions of lemma \ref{lem:1}, the sequence \( \{\tilde{\lambda}_{m,m}\}_{m=1}^\infty \) is decreasing. Therefore, there exists a random variable \( C(\omega) \geq 0 \) such that for all $\omega\in\Omega$:
\[
\lim_{m \to \infty} \tilde{\lambda}_{m,m}(\omega) = C(\omega)
\]
\end{lemma}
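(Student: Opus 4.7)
The plan is to establish monotonicity of $\{\tilde{\lambda}_{m,m}\}$ via the Cauchy interlacing theorem, and then couple this with non-negativity to invoke the monotone convergence theorem for real sequences.

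First I would unpack the Data Assumption carefully: it guarantees that $K_m$ is precisely the top-left $m\times m$ principal submatrix of $K_{m+1}$, since the first $m$ data points $X_1,\dots,X_m$ are shared between the two samples and appear in the same order. This is the structural fact that enables interlacing; without the Data Assumption the matrices would be built from independent draws and the argument would fail.

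Next I would apply the Cauchy interlacing theorem (also known as Poincaré separation) to the Hermitian matrix $K_{m+1}$ and its principal submatrix $K_m$. Writing the eigenvalues in descending order, interlacing gives
$$\tilde{\lambda}_{1,m+1}\geq \tilde{\lambda}_{1,m}\geq \tilde{\lambda}_{2,m+1}\geq \tilde{\lambda}_{2,m}\geq \cdots \geq \tilde{\lambda}_{m,m}\geq \tilde{\lambda}_{m+1,m+1}.$$
The rightmost inequality is exactly $\tilde{\lambda}_{m+1,m+1}(\omega)\leq \tilde{\lambda}_{m,m}(\omega)$ pointwise in $\omega$, showing the sequence is monotonically decreasing sample-by-sample.

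Then I would use positive semi-definiteness of $k$: for any $\omega$ and any $v\in\mathbb{R}^m$, $v^T K_m(\omega) v = \sum_{i,j} v_i v_j k(X_i(\omega),X_j(\omega))\geq 0$, so $K_m(\omega)$ is PSD and hence $\tilde{\lambda}_{m,m}(\omega)\geq 0$. A decreasing sequence of real numbers bounded below by $0$ converges to its infimum; defining $C(\omega):=\lim_{m\to\infty}\tilde{\lambda}_{m,m}(\omega)$ pointwise gives the desired random variable with $C(\omega)\geq 0$ everywhere. Measurability of $C$ follows from it being the pointwise limit of measurable functions (each $\tilde{\lambda}_{m,m}$ is a continuous function of the entries of $K_m$, which are measurable in $\omega$).

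There is essentially no obstacle here: the proof is a direct application of Cauchy interlacing plus monotone convergence. The only subtlety worth flagging is that the conclusion holds pointwise for every $\omega\in\Omega$ rather than merely almost surely, which is stronger than what Lemma~\ref{lem:1} provides and is exactly what is needed later to extract a deterministic bound in Corollary~\ref{cor:1}.
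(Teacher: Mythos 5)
Your proof is correct and follows essentially the same route as the paper: the Data Assumption makes $K_m$ a principal submatrix of $K_{m+1}$, Cauchy interlacing gives monotonicity of $\tilde{\lambda}_{m,m}$, and positive semi-definiteness plus the monotone convergence theorem for real sequences yields the pointwise limit $C(\omega)\geq 0$. The added remark on measurability of $C$ is a small bonus the paper leaves implicit.
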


Versions of lemma \ref{lem:1} have been acknowledged in the machine learning literature previously. This result appears in some form as theorem 3.4 in Baker \cite{bakerNumericalIntegrationTreatment1979}, but this treatment is fully deterministic and is proved using a Riemann sum argument. 

A treatment where the training inputs are treated as random variables $X_i$, meaning that the kernel matrix $K$ is random, is given by Koltchinskii and Giné \cite{koltchinskii_random_2000}. In this paper the authors predominantly focus on the adjusted matrix that is obtained by deleting the diagonal of the matrix $K$. They show that the normalised eigenvalues of this matrix converge to the eigenvalues of $T_k$ almost surely, if and only if the operator $T_k$ is Hilbert Schmidt. Note that $T_k$ is Hilbert Schmidt iff $k \in L_2(\chi^2,\mu^2)$.

Their proof of this result goes through for the matrix $K$ (without deleted diagonal) if we have the following:
$$\lim_{R\to \infty}\lim_{m\to \infty} \frac{1}{m^2}\sum_{i=1}^m(k-k_R)^2(X_i,X_i) = 0 \hspace{1em} (a.s)$$
Where the $X_i$ are independent, $\chi$ valued, $\mu$ distributed random variables, $k$ is a symmetric, real kernel satisfying the series expansion above and $k_R$ has the truncated expansion:
$$k_R(x,y) = \sum_{i=1}^R\lambda_i\phi_i(x)\phi_i(y)$$
It is easy to show that the above condition holds when the kernel $k$ has bounded diagonal, that is:
$$\exists B>0, \hspace{0.5em} \forall x\in \chi ,\hspace{0.5em} |k(x,x)|\leq B$$
From here the proof of lemma \ref{lem:1} is as follows:
\begin{proof}[Proof of Lemma~\ref{lem:1}]
    The convergence in $l_2$ norm follows immediately from the argument outlined above and Theorem 3.1 in \cite{koltchinskii_random_2000}, since the sequences $\{\lambda_n\}_{n=1}^\infty$ and $\{\tilde{\lambda}_{n,m}\}_{n=1}^\infty$ (where the latter is padded with 0s beyond $n=m+1)$ are both non-negative and decreasing. Non-negativity of both sequences of eigenvalues follows from $k$ being positive semi-definite.\\
    The final statement holds as, for any fixed $n \in \mathbb{N}$ and $\forall m\in \mathbb{N}$
    \begin{align}
        \left|\frac{1}{m}\tilde{\lambda}_{n,m} - \lambda_n\right|^2 &\leq \sum_{i=1}^\infty\left|\frac{1}{m}\tilde{\lambda}_{i,m} - \lambda_i\right|^2\\
        &=\norm{\{\frac{1}{m}\tilde{\lambda}_{n,m} - \lambda_n\}_{n=1}^\infty}_{l_2}\\
        &\to0 \hspace{1em} (a.s)
    \end{align}\\
\end{proof}
The proof of lemma \ref{lem:2} relies on the \nameref{ass:data}.
This assumption is helpful as it means that $\forall m\in \mathbb{N}$, $K_{m-1}$ is a principal submatrix of $K_m$. Recall that a principal submatrix is a submatrix obtained by repeatedly deleting rows and columns of the same index. This will allow us to apply the following theorem:
\begin{namedtheorem}[Cauchy Interlacing Theorem]
    Let $A$ be an $m\times m$ hermitian matrix with eigenvalues $\{\lambda_i\}_{i=1}^m$ and let $B$ be an $(m-1)\times(m-1)$ principal submatrix of $A$, with eigenvalues $\{\mu_i\}_{i=1}^{m-1}$. Assume both sets of eigenvalues are taken to be in decreasing orders and repeated according to their multiplicities. Then the two sets of eigenvalues are interlaced in the following way:
    $$\lambda_1 \geq \mu_1 \geq \lambda_2 \geq...\geq\lambda_{m-1}\geq\mu_{m-1}\geq\lambda_m$$
\end{namedtheorem}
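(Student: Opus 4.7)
The plan is to prove the Cauchy Interlacing Theorem by invoking the Courant--Fischer min-max characterisation of the eigenvalues of a Hermitian matrix and then exploiting the fact that any principal submatrix corresponds to the restriction of the associated quadratic form to a coordinate subspace. Concretely, for a Hermitian matrix $A$ acting on an $m$-dimensional inner product space, Courant--Fischer gives both
$$\lambda_k = \max_{\dim S = k}\ \min_{v\in S,\,\|v\|=1} v^*Av \quad\text{and}\quad \lambda_k = \min_{\dim S = m-k+1}\ \max_{v\in S,\,\|v\|=1} v^*Av.$$
I will take this as a known tool; it is standard and not the content being proved here.

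First I would reduce to a convenient form of the submatrix. Since $B$ is an $(m-1)\times(m-1)$ principal submatrix of $A$, there is a permutation matrix $P$ such that $P^{\top}AP$ has $B$ as its leading $(m-1)\times(m-1)$ block. Because $P^\top A P$ has the same spectrum as $A$, I may assume without loss of generality that $B$ is this leading block. Then $B$ is precisely the matrix of the quadratic form $v\mapsto v^*Av$ restricted to the coordinate subspace $W=\{v\in\mathbb{C}^m : v_m=0\}$, which I identify with $\mathbb{C}^{m-1}$ via the obvious isometry.

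Next I would prove the two halves of each interlacing inequality separately, for each $k=1,\dots,m-1$. For the upper bound $\mu_k\le\lambda_k$, let $S^\star\subset W$ be a $k$-dimensional subspace achieving the max in the max-min form of Courant--Fischer applied to $B$, so that $\mu_k=\min_{v\in S^\star,\|v\|=1}v^*Bv=\min_{v\in S^\star,\|v\|=1}v^*Av$. Viewing $S^\star$ as a $k$-dimensional subspace of $\mathbb{C}^m$ and applying the max-min characterisation of $\lambda_k(A)$ gives $\lambda_k\ge\mu_k$. For the lower bound $\mu_k\ge\lambda_{k+1}$, let $T^\star\subset W$ be an $(m-k)$-dimensional subspace (note $(m-1)-k+1=m-k$) achieving the min in the min-max form for $B$, so $\mu_k=\max_{v\in T^\star,\|v\|=1}v^*Av$. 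Now $T^\star$ is also an $(m-k)$-dimensional subspace of $\mathbb{C}^m$, and the min-max form for $\lambda_{k+1}(A)$ requires exactly dimension $m-(k+1)+1=m-k$, so $\lambda_{k+1}\le\mu_k$. Chaining these inequalities over $k=1,\dots,m-1$ yields the full interlacing chain.

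The main obstacle is really bookkeeping rather than mathematics: one must keep track of whether one is using the max-min or min-max form of Courant--Fischer and verify that the dimensions $k$ and $m-k$ match between $A$ and $B$ correctly (the asymmetry $(m-1)-k+1=m-k$ is the small but crucial point that makes the $\lambda_{k+1}$ comparison work). Once Courant--Fischer is taken for granted, the argument is essentially a dimension count together with the observation that restricting a quadratic form to a subspace can only shrink the set of admissible trial subspaces in one direction and enlarge it in the other.
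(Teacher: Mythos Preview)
Your proof via Courant--Fischer is correct and is the standard argument. However, the paper does not actually prove the Cauchy Interlacing Theorem: it merely states the result, notes that it is well known, and cites Fisk's short note for a proof before immediately applying it in the proof of Lemma~\ref{lem:2}. So there is no in-paper proof to compare against.

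For what it is worth, the cited reference (Fisk, 2005) gives essentially the same min--max argument you outline: one identifies the principal submatrix with the restriction of the Rayleigh quotient to a codimension-one coordinate subspace and then reads off both inequalities from the two dual forms of Courant--Fischer, with the key dimension bookkeeping $(m-1)-k+1=m-k=m-(k+1)+1$ doing all the work. Your write-up is therefore fully in line with both the paper's intent and the source it defers to.
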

This result is well known to the linear algebra literature and a short proof of the theorem is given in \cite{fisk2005shortproofcauchysinterlace}. This enables us to prove lemma \ref{lem:2}:
\begin{proof}[Proof of Lemma \ref{lem:2}]
    We show first that $\{\tilde{\lambda}_{m,m}\}_{m=1}^\infty$ is decreasing. 
    
    Note that $\forall m\geq 2$ the matrix $K_{m-1}$ is a principal submatrix of $K_m$. Moreover each matrix $K_m$ is real symmetric (and hence hermitian) so we can apply the Cauchy Interlacing Theorem. The last two terms of the inequality stated in the theorem give $\tilde{\lambda}_{m-1, m-1}\geq\tilde{\lambda}_{m,m}$, so the sequence is decreasing.
    
    Existence of the limit follows simply because $k$ being positive semi-definite implies that the matrices $K_m$ are also positive semi-definite $\forall m$, therefore we have $\tilde{\lambda}_{m,m} \geq 0$ $\forall m\in \mathbb{N}$. Hence the sequence $\{\tilde{\lambda}_{m,m}\}_{m=1}^\infty$ converges by the monotone convergence theorem. Non-negativity of $C$ follows by positive semi-definiteness of $K_m$ also.
\end{proof}
Recall that since each $\tilde{\lambda}_{m,m}$ is a random variable the limit $C$ is also a random variable $C:\Omega \to \mathbb{R}$. However a deterministic upper bound on $C(\omega)$ can be obtained as follows:
\begin{align}
    &\forall\omega\in\Omega,\forall m\in\mathbb{N}, \\
    &\begin{aligned}
            \tilde{\lambda}_{m,m}(w) &\leq \tilde{\lambda}_{1,1}(w)\\
    &= k(X_1(w),X_1(w))\\
    &\leq B
    \end{aligned}\\
    \\
    &\implies \forall \omega\in\Omega, C(\omega)\leq B
\end{align}
Where $B>0$ is the assumed bound on the diagonal of the kernel as above. Hence the random variable obtained in the limit is bounded between $0$ and $B$.

We remark that the convergence of $\tilde{\lambda}_{m,m}(\omega)$ to $C(\omega)$ holds $\forall\omega\in\Omega$, meaning it holds pointwise (or surely).
\subsection{Proof of Theorems}
Now using the lemmas we are ready to prove Theorems \ref{Thm:1} \& \ref{Thm:2}.
\begin{proof}[Proof of Theorem \ref{Thm:1}]
    The result immediately follows from Lemma 1 and 2, as if we let $A\subset\Omega$ such that:
    $$A = \left\{\omega\in\Omega|\lim_{m\to\infty}\frac{1}{m}\tilde{\lambda}_{1,m}(\omega) = \lambda_1 \right\}$$
    Note that $\lambda_1$ is deterministic. Then by Lemma 1, the event $A$ occurs with probability 1. Then $\forall\omega\in A$ we have:
    \begin{align}
        \lim_{m\to\infty} \kappa_{m,\sigma}(\omega)\frac{C(\omega
        ) +\sigma^2}{\lambda_1m} &= \lim_{m\to\infty}\frac{\tilde{\lambda}_{1,m}(\omega) + \sigma^2}{\tilde{\lambda}_{m,m}(\omega)+\sigma^2} \frac{C(w) +\sigma^2}{\lambda_1m}\\
        \\
        &=\lim_{m\to\infty}\frac{\frac{1}{m}\tilde{\lambda}_{1,m}(\omega)+\frac{1}{m}\sigma^2}{\tilde{\lambda}_{m,m} +\sigma^2}\frac{C(\omega)+\sigma^2}{\lambda_1}\\
        \\
        &=\frac{\lambda_1}{C(\omega) + \sigma^2}\frac{C(\omega)+\sigma^2}{\lambda_1}, \text{By definition of }A\text{ and Lemma 2}\\
        \\
        &= 1
    \end{align}
    Since the event $A$ occurs with probability 1 and the limit holds $\forall\omega\in A$, we conclude that the limit holds almost surely.
\end{proof}
The deduction of corollary \ref{cor:1} follows from this as:
\begin{align}
    &\forall\omega\in\Omega,\hspace{1em}0\leq C(\omega)\leq B\\
    \\
    &\implies \frac{\lambda_1}{B+\sigma^2}m\leq\frac{\lambda_1}{C(\omega)+\sigma^2}m\leq\frac{\lambda_1}{\sigma^2}m \hspace{1em}\forall m\in\mathbb{N}
\end{align}
\makeatletter
In addition we can prove theorem \ref{Thm:2} as follows:
\begin{proof}[Proof of Theorem \ref{Thm:2}]
\phantomsection
\protected@edef\@currentlabelname{Proof of Theorem~\ref{Thm:2}}
\label{Pf:thm2}
    Let $A\subset\Omega$ denote the following event:
    $$A = \left\{\omega\in\Omega|\lim_{m\to\infty}\frac{1}{m}\tilde{\lambda}_{1,m}(\omega) = \lambda_1(\omega)\right\}$$
    Where again the event occurs with probability 1 by Lemma 1. We also note that for any $\omega \in A$ we have the following by Lemma 2:
    $$\frac{1}{m}\kappa_m(\omega) = \frac{\tilde{\lambda}_{1,m}(\omega)}{m\tilde{\lambda}_{m,m}(\omega)} \geq \frac{\tilde{\lambda}_{1,m}(\omega)}{m\tilde{\lambda}_{1,1}(\omega)}\to\frac{\lambda_1}{\tilde{\lambda}_{1,1}(\omega)}$$
    Note that $\tilde{\lambda}_{1,1} = k(X_1,X_1)$. It follows from the above that for any fixed $\omega\in A$
    \begin{align}
        &\forall\epsilon>0,\exists M\in\mathbb{N} \hspace{0.5em}s.t\hspace{0.5em}\forall m\geq M,\\
        \\
        &\left|\frac{\frac{1}{m}\tilde{\lambda}_{1,m}}{\tilde{\lambda}_{1,1}} - \frac{\lambda_1}{\tilde{\lambda}_{1,1}} \right|<\epsilon\\
        \\
        &\implies \frac{\lambda_1}{\tilde{\lambda}_{1,1}} - \epsilon < \frac{\frac{1}{m}\tilde{\lambda}_{1,m}}{\tilde{{\lambda}}_{1,1}}
    \end{align}
    Taking $\epsilon = \frac{\lambda_1}{2\tilde{\lambda}_{1,1}}$ we then obtain for any fixed $\omega\in A$, $\exists M\in\mathbb{N}$ s.t $\forall
    m\geq M,$
    \begin{align}
        \frac{\lambda_1}{2\tilde{\lambda}_{1,1}(\omega)} &< \frac{\frac{1}{m}\tilde{\lambda}_{1,m}(\omega)}{\tilde{{\lambda}}_{1,1}(\omega)}\\
        \\
        &\leq \frac{1}{m} \frac{\tilde{\lambda}_{1,m}(\omega)}{\tilde{\lambda}_{m,m}
        (\omega)}\\
        &=\frac{1}{m}\kappa_m(\omega)
    \end{align}
    Hence $\forall \omega\in A,\kappa_m(\omega)\in\Omega(m)$ and so $\omega\in E$, which implies $A\subset E$. Since $A$ occurs with probability 1 we conclude that $E$ does also which completes the proof.
\end{proof}
\begin{proof}[Proof of Corollary \ref{cor:2}]
    For any matrix $A$ we have:
    $$\norm{A}_2\leq\norm{A}_F$$
    where $\norm{\cdot}_2$ denotes the spectral norm. This gives:
    \begin{align}
           \kappa &= \norm{K+\sigma^2I}_2\norm{(K+\sigma^2I)^{-1}}_2\\
           &\leq\norm{K+\sigma^2I}_F\norm{(K+\sigma^2I)^{-1}}_2\\
           &=\kappa_F
    \end{align}
    The result then follows in the $\sigma>0$ case by theorem \ref{Thm:1} and in the $\sigma=0$ case by theorem \ref{Thm:2}.
\end{proof}
\makeatother
\begin{proof}[Proof of Theorem \ref{thm:3}]
    Let $A = K+\sigma_n^2I$ and let $A_{i,j}$ denote the $i,j$'th element of $A$. Consider the largest eigenvalue of K, $\tilde{\lambda}_{1,1}$. Then by the Gershgorin Circle theorem $\exists i\in\{1,...,m\}$ such that:
    \begin{align}
        |\tilde{\lambda}_{1,m} - k(x_i,x_i)| &\leq \sum_{j\neq i}|k(x_i,x_j)|\\
        &\leq s(m)B
    \end{align}
    Where $B$ is the assumed bound on the kernel. Then by the triangle inequality this implies:
    $$|\tilde{\lambda}_{1,m}|\leq(s(m)+1)B$$
    Then by lemma 1 we have:
    \begin{align}
        \lim_{m\to\infty} \frac{(s(m)+1)B}{m} &\geq \lim_{m\to\infty}\frac{\tilde{\lambda}_{1,m}}{m}\\
        &=\lambda_1 >0
    \end{align}
    Where the above holds with probability 1. It then immediately follows that:
    $$\lim_{m\to\infty}\frac{1}{m}s(m)>0 \hspace{1em} (a.s)$$
    Finally, we have $s(m) \leq m$ by its definition and so we conclude that with probability 1:
    $$s(m) = \Theta(m)$$
\end{proof}
\begin{proof}[Proof of Theorem \ref{thm:4}]
    To see the lower bound, note that since $K+\sigma^2I$ is symmetric, positive semi-definite, its Frobenius norm is given by:
    \begin{align}
        \norm{K+\sigma^2I}_F^2 &= \sum_{i=1}^m(\tilde{\lambda}_{i,m} + \sigma^2)^2\\
        &\geq \tilde{\lambda}_{1,m}^2
    \end{align}
    Therefore,
    \begin{align}
        \frac{1}{m}\norm{K+\sigma^2I}_F &\geq \frac{1}{m}\tilde{\lambda}_{1,m}\\
        &\to\lambda_1 >0
    \end{align}
    Where the last line holds with probability 1 by lemma \ref{lem:1}.
\end{proof}
\bibliographystyle{unsrt}
\bibliography{references}
\paragraph{Acknowledgments}
This work was supported by the UK EPSRC (EP/W032643/1, EP/Y004752/1, and EP/Z53318X/1) , KIST Open Research programme and the National Research Foundation of Korea(NRF) grant funded by the Korea government(MSIT) (No. RS-2024-00413957). 
We acknowledge helpful discussions with Joe Fitzsimons.
\paragraph{Author Contributions}
All three authors conceived the idea, DL performed the calculations, proved the theorems and wrote the first draft. All three participated in discussions and writing the final draft. The work was supervised by RB and MSK.
\paragraph{Data Availability}
The code used to generate Figures \ref{fig: 1},\ref{fig: 2} and \ref{fig: 3} is available at:\\
\href{https://github.com/Dominic-Lowe/GPR-Condition-Number}{https://github.com/Dominic-Lowe/GPR-Condition-Number}
\paragraph{Competing Interests}
The authors declare no competing interests.
\paragraph{Correspondence}
Any correspondence can be written to Dominic Lowe (dal20@ic.ac.uk), M.S. Kim (m.kim@imperial.ac.uk) and Roberto Bondesan (r.bondesan@imperial.ac.uk).

\newpage
\appendix
\renewcommand{\thesection}{S\arabic{section}} 
\section*{Supplementary Material}
\addcontentsline{toc}{section}{Supplementary Material}
\setcounter{section}{0}

\section{Counterexample to C = 0}\label{supp:counterexample}
\begin{equation}  
k(x,y) = \begin{cases} 
      1+e^{-x^2} & x=y \\
      0 & \text{otherwise}
   \end{cases}
\end{equation}
The above kernel is in $L_2(\chi^2,\mu^2)$ when $\mu$ is absolutely continuous with respect to the Lebesgue measure, since the diagonal will have measure 0 in $\chi^2$ and so $k =0$ almost everywhere. The kernel is then clearly positive semi-definite and has bounded diagonal. However, we note that the Gram matrices $K_m$ will be diagonal with entries strictly greater than 1, for any $m$. Thus we are able to conclude that the limit of the smallest eigenvalues, i.e $\tilde{\lambda}_{m,m}$, cannot be smaller than 1 for any distribution or sample. That is, $C(\omega)\geq1, \forall \omega\in\Omega$.

We were unable to obtain a counterexample that is not zero almost everywhere, so it is possible that the limit is provably zero with this restriction. We leave this as an open question.

\section{Spectral Decomposition of RBF Kernel with 
Multivariate Normal Measure}\label{supp:2}
We now show how the decomposition of the integral operator $T_k$ in the case of the RBF kernel with 1D normal density can be extended to a general $d$-dimensional normal distribution. Initially we will assume each vector $x \in \mathbb{R}^d$ is drawn from an isotropic Gaussian distribution $p(x) \sim N(\vec{v}, \sigma_x^2I)$ and we denote the $i$'th entry of $x$ by $x_i$. We will then generalise this to the multivariate case. The density $p(x)$ is given by:  
\begin{align}
    p(x) &= (2\pi)^{-d/2}\text{det}(\sigma_x^2I)^{-1/2}\text{exp}(-\frac{1}{2}(x-\vec{v})^\top(\sigma_x^2I)^{-1}(x-\vec{v}))\\
    \\
    &=\frac{1}{(2\pi)^{d/2}\sigma_x^d}\text{exp}\left(-\frac{(\norm{x- \vec{\mu}}^2}{2\sigma_x^{2n}}\right)\\
    \\
    &= p_1(x_1)...p_d(x_d) 
\end{align}
where we have taken:
$$p_i(x_i) = \frac{1}{\sqrt{2\pi}\sigma_x}\text{exp}\left(-\frac{x_i-v_i}{2\sigma_x^2}\right)$$
Note that $p_i$ is just the density of the distribution $N(v_i, \sigma_x^2)$, the exact setting described in the 1D case. Then in $d$ dimensions we see that the operator $T_k$ can be rewritten as:
\begin{align}
    &T_k(f)(y) = \int_{\mathbb{R}^D} \text{exp}(\frac{-1}{2l^2}\norm{x-y}^2)f(x) p(x) dx\\
    \\
    &= \int_{-\infty}^\infty...\int_{-\infty}^\infty e^{\frac{-1}{2l^2}\norm{x_1 - y_1}^2}...e^{\frac{-1}{2l^2}\norm{x_d - y_d}^2}p_1(x_1)...p_d(x_d)f(x) dx_1dx_2...dx_d
\end{align}
We see from the above that when $f(x)$ is separable the above reduces to a product of 1D integrals. Therefore it follows that products of the known 1D eigenfunctions give eigenfunctions of $T_k$ in $d$ dimensions. Specifically given a multi index $\vec{k} = (k_1,...,k_d)\in\mathbb{N}^d$ we have a corresponding eigenfunction:
$$\prod_{i=1}^d\phi_{k_i}(x_i-\mu_i)$$
The eigenvalue corresponding to this is then given by:
$$\lambda_{\vec{k}} = (\frac{2a}{A})^{d/2}B^{\tilde{k}}$$
Where $\tilde{k} = \sum_{i=1}^dk_i$. As each eigenvalue only depends on $\tilde{k}$ we index them using this instead of the multi-index $\Vec{k}$. It then becomes clear that each eigenvalue $\lambda_{\tilde{k}}$ has a degeneracy given by the number of ways a sum of $d$ non-negative integers can equal $\tilde{k}$. This degeneracy can be expressed as:
\[
\binom{\tilde{k} + d - 1}{d - 1}
\]
Using the above we can extend the argument to a multivariate normal distribution. That is, the case $p(x) \sim N(\vec{v}, \Sigma)$, where the only assumption we make is that $\text{det}(\Sigma)\neq0$. This is equivalent to assuming the Gaussian distribution is not degenerate and also that it has a valid density function with respect to the Lebesgue measure. This implies that $\Sigma$ is symmetric positive definite and therefore has strictly positive eigenvalues, which we denote by $\{\nu_i\}_{i=1}^{d}$. It is worth noting that these eigenvalues need not be distinct but the matrix is still diagonalisable. We therefore write:
$$\Sigma = PDP^{-1}$$
Where $D = \text{diag}(\{\nu_i\}_{i=1}^d)$ and $P$ is defined in the usual way. Note that since $\Sigma$ is symmetric $P$ is orthogonal, so $P^\top = P^{-1}$. The density $p(x)$ is then given by:
$$p(x) = \frac{1}{(2\pi)^{d/2}(\nu_1...\nu_d)^{1/2}}\text{exp}(\frac{-1}{2}(x - \vec{v})^\top P^\top D^{-1}P(x-\vec{v}))$$
The eigenvalue equation for the integral operator $T_k$ is then given by:
$$\frac{1}{(2\pi)^{d/2}(\nu_1...\nu_d)^{1/2}}\int_{\mathbb{R}^d}\text{exp}(\frac{-1}{2l^2}\norm{x-y}^2)\text{exp}(\frac{-1}{2}(x - \vec{v})^\top P^\top D^{-1}P(x- \vec{v}))\phi_k(x)\hspace{0.1cm}dx = \lambda_k\phi_k(y)$$
For all $y\in \mathbb{R}^d$.Here we perform the change of variables $z = P(x - \vec{v})$ and we can also replace $y$ with $P^{-1}y+\vec{v}$ as the equation should hold for all $y\in \mathbb{R}^d$. This allows us to rewrite the equation as:
$$\frac{1}{(2\pi)^{d/2}(\nu_1...\nu_d)^{1/2}}\int_{\mathbb{R}^d}\text{exp}(\frac{-1}{2l^2}\norm{P^{-1}z - P^{-1}y}^2)\text{exp}(\frac{-1}{2}z^\top D^{-1}z)\phi_k(P^{-1}z + \vec{v})\hspace{0.1cm}dz = \lambda_k\phi_k(P^{-1}y + \vec{v}) $$
Where the Jacobian of the $x\to z$ transformation is $P$ which has determinant 1 in absolute value, since it is orthogonal. Then since $P^{-1}$ is also orthogonal it follows that $\norm{P^{-1}(z-y)} = \norm{z-y}$. Hence with this we can simplify the left hand side of the equation as:
\begin{align}
    &\frac{1}{(2\pi)^{d/2}(\nu_1...\nu_d)^{1/2}}\int_{\mathbb{R}^d}\text{exp}(\frac{-1}{2l^2}\norm{P^{-1}z - P^{-1}y}^2)\text{exp}(\frac{-1}{2}z^\top D^{-1}z)\phi_k(P^{-1}z + \vec{v}) \hspace{0.1cm} dz \\
    \\
    &= \frac{1}{(2\pi)^{d/2}(\nu_1...\nu_d)^{1/2}}\int_{\mathbb{R}^d}\text{exp}(\frac{-1}{2l^2}\norm{z - y}^2)\text{exp}\left(\frac{-1}{2}\sum_{i=1}^d\frac{z_i^2}{\nu_i}\right)\phi_k(P^{-1}z + \vec{v}) \hspace{0.1cm}dz\\
    \\
    &=\int_{\mathbb{R}^d}\prod_{i=1}^d\left(\frac{1}{(2\pi\nu_i)^{1/2}}\text{exp}(\frac{-1}{2l^2}(z_i-y_i)^2)\text{exp}(\frac{-z_i^2}{2\nu_i})\right)\phi_k(P^{-1}z+\vec{v})\hspace{0.1cm}dz\\
    \\
    &=\int_{\mathbb{R}^d}\prod_{i=1}^d\left(\text{exp}(\frac{-1}{2l^2}(z_i-y_i)^2)p_i(z_i)\right)\phi_k(P^{-1}z+\vec{v})\hspace{0.1cm}dz\\
\end{align}
Where here each $p_i(z_i)$ is the density of a 1D gaussian $N(0,\nu_i)$. As before it is clear that if we take $\phi_k(P^{-1}z + \vec{v})$ to be a product of the 1d eigenfunctions for a $N(0,\nu_i)$ distribution we get an eigenfunction. In this case the eigenvalue will then also be the product of the corresponding 1d eigenvalues. It is clear that these higher dimensional eigenfunctions are better indexed by a multi index $\vec{k} \in \mathbb{N}^d$. If we also denote the 1d eigenfunctions as $\tilde{\phi}$ then they are given by:
    $$\tilde{\phi}_{k_i}(x) = \text{exp}(-(c_i-a_i)x^2)H_{k_i}(x\sqrt{2c_i})$$
Where $H_{k_i}$ denotes the $k_i$'th Physicist's Hermite polynomial. Then we see that the higher dimensional eigenfunctions $\phi_{\vec{k}}$ are given by:
$$\phi_{\vec{k}}(P^{-1}x+\vec{v}) = \prod_{i=1}^d\tilde{\phi}_{k_i}(x_i)$$
or equivalently:
$$\phi_{\vec{k}}(x) = \prod_{i=1}^d\tilde{\phi}_{k_i}([P(x-\vec{v})]_i)$$
The corresponding eigenvalue is then given by:
$$\lambda_{\vec{k}} = \prod_{i=1}^d\left(\frac{2a_i}{A_i}\right)^{1/2}B_i^{k_i}$$
where:
\begin{align}
    a_i &= \frac{1}{4\nu_i}\\
    b &= \frac{1}{2l^2}\\
    c_i &= \sqrt{a_i^2+2a_ib}\\
    A_i &= a_i + b + c_i\\
    B_i &= \frac{b}{A_i}
\end{align}
\section{Plot Without Data Assumption}\label{supp:3}
\begin{figure}[ht]
    \centering
    \includegraphics[width=\textwidth]{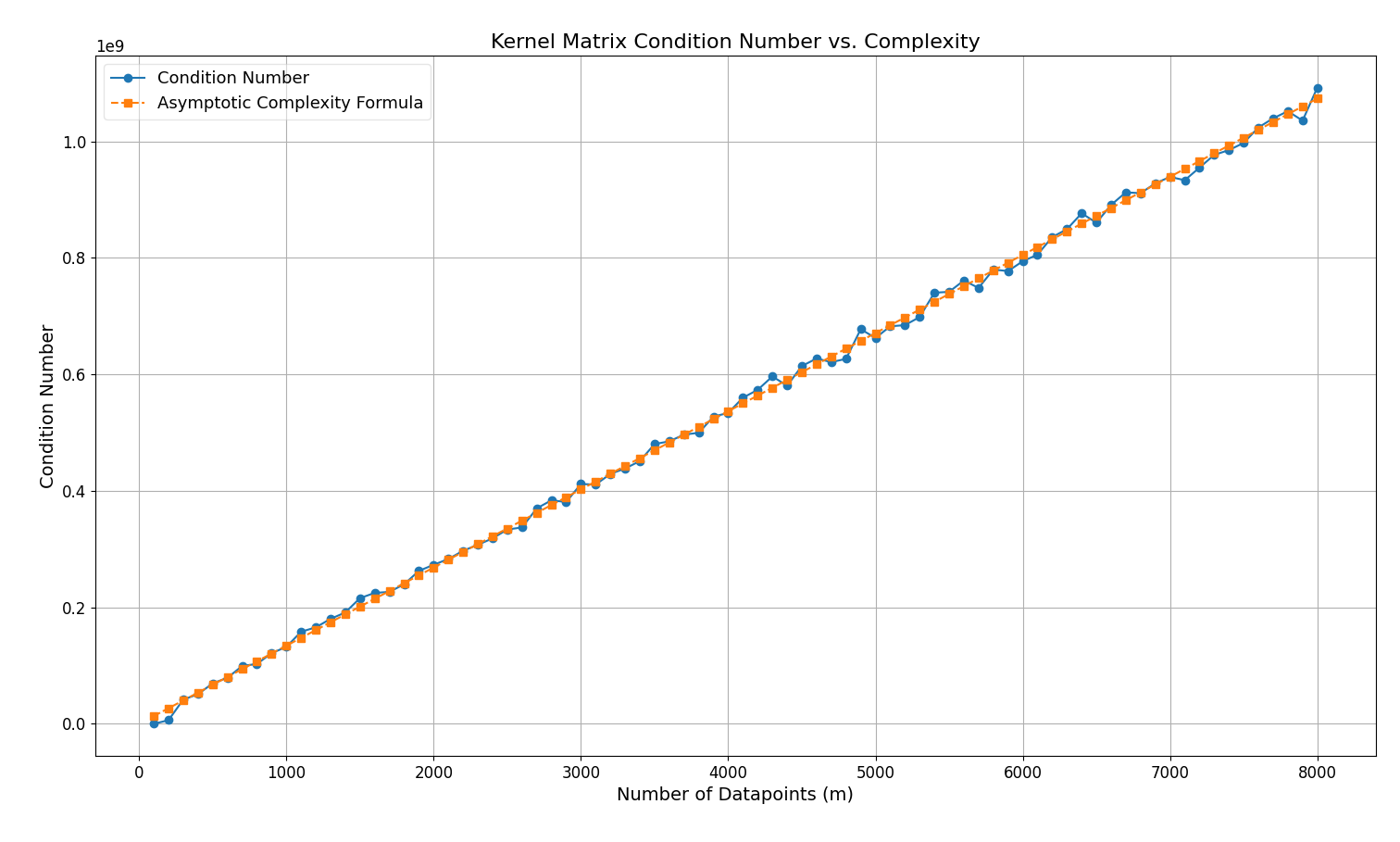}
    \caption{Condition number growth of Gram matrices formed using the RBF kernel with multivariate normal data $N(\vec{v},\Sigma)$, where entirely new data is generated at each timestep. That is, without the \nameref{ass:data}. Parameter values: $\vec{v}=\vec{0}$ $\sigma_n=10^{-3}$, $l=1$, and $\Sigma$ is as stated in section \ref{section 2.2}}
    \label{fig: 3}
\end{figure}

\end{document}